\documentclass[a4paper,UKenglish,cleveref,autoref,thm-restate]{lipics-v2021}

\nolinenumbers

\usepackage{microtype}
\usepackage{nccmath}
\usepackage[nocompress]{cite}
\usepackage{tikz}
\usepackage{circuitikz}
\usetikzlibrary{automata, positioning}
\usepackage{tabularx}
\usepackage{booktabs}
\usepackage{xspace}
\usepackage[noEnd=true,indLines=false]{algpseudocodex}

\usepackage[only,llbracket,rrbracket]{stmaryrd}
\usepackage{mathrsfs}
\usepackage{mathtools}
\usepackage{adjustbox}

\renewcommand{\geq}{\geqslant}
\renewcommand{\leq}{\leqslant}

\newcommand{\ie}{i.e.,~}

\newcommand{\sse}{\subseteq}

\newcommand{\ignore}[1]{}

\newcommand{\abs}[1] {\ensuremath\left|#1\right|}

\newcommand{\set}[2] {\ensuremath{\left\{#1 \mid #2\right\}}}

\newcommand{\N}{\mathbb{N}}

\newcommand{\gen}[1]{\langle #1 \rangle}

\newcommand{\Gen}[2]{\left< \mathinner{#1} \mid \mathinner{#2}\right>}

\newcommand{\NP}{\ensuremath{\mathsf{NP}}\xspace}

\renewcommand{\P}{\ensuremath{\mathsf{P}}\xspace}

\newcommand\nindent{.5pt}
\newcommand\noverline[1]{%
	\kern\nindent\overline{\kern-\nindent#1\kern-\nindent}\kern\nindent}

\newcommand{\cX}{\mathcal{X}}

\newcommand{\sD}{\mathscr{D}}

\newcommand{\sC}{\mathscr{C}}

\newcommand{\compproblem}[3][]{%
	\par\vspace{0.125cm plus 0.1cm minus 0.05cm}\begin{tabularx}{\textwidth-2\parindent}{lX}%
		\if\relax\detokenize{#1}\relax%
		\else%
		\textnormal{\textbf{Constant:}}&#1\\%
		\fi%
		\textnormal{\textbf{Input:}}&#2\\%
		\textnormal{\textbf{Question:}}&#3\\%
	\end{tabularx}\vspace{0.125cm plus 0.1cm minus 0.05cm}\par%
}

\newcommand{\comproblem}[3][]{%
	\par\vspace{0.125cm plus 0.1cm minus 0.05cm}\begin{tabularx}{\textwidth-2\parindent}{lX}%
		\if\relax\detokenize{#1}\relax%
		\else%
		\textnormal{\textbf{Constant:}}&#1\\%
		\fi%
		\textnormal{\textbf{Input:}}&#2\\%
		\textnormal{\textbf{Output:}}&#3\\%
	\end{tabularx}\vspace{0.125cm plus 0.1cm minus 0.05cm}\par%
}

\theoremstyle{plain}

\makeatletter % bold heading for remarks
\def\th@remark{%
	\thm@headfont{%
		\textcolor{lipicsGray}{$\blacktriangleright$}\nobreakspace\sffamily\bfseries}%
	\normalfont % body font
	\thm@preskip\topsep \divide\thm@preskip\tw@
	\thm@postskip\thm@preskip
}
\makeatother

\theoremstyle{plain}

\makeatletter
\providecommand\iitem{}
\providecommand\qitem{}
\newcommand\decproblem@iitem@label{\rlap{Input.}\phantom{Question.}}
\newcommand\decproblem@qitem@label{Question.}

\makeatother

\makeatletter
\providecommand\iitem{}
\providecommand\qitem{}
\newcommand\coproblem@iitem@label{\rlap{Input.}\phantom{Output.}}
\newcommand\coproblem@qitem@label{Output.}

\makeatother

\newcommand{\dEqn}[2][]{\textup{\textsc{eqn${}_{\mathbf{#1}}\expandafter\ifx\expandafter\relax\detokenize{#2}\relax\else(#2)\fi$}}}
\newcommand{\dEqnSys}[2][]{\textup{\textsc{eqn${}^\ast_{\mathbf{#1}}\expandafter\ifx\expandafter\relax\detokenize{#2}\relax\else(#2)\fi$}}}

\newcommand{\fA}{\mathfrak A}
\newcommand{\fB}{\mathfrak B}
\newcommand{\fC}{\mathfrak C}
\newcommand{\fI}{\mathfrak I}
\DeclareMathOperator{\PCSP}{PCSP}
\DeclareMathOperator{\CSP}{CSP}
\DeclareMathOperator{\Proj}{Proj}
\DeclareMathOperator{\Meta}{Meta}
\DeclareMathOperator{\PMeta}{PMeta}
\DeclareMathOperator{\PCreaMeta}{PCMeta}
\DeclareMathOperator{\Pol}{Pol}
\DeclareMathOperator{\Aut}{Aut}

\newcommand{\SigAb}{\Sigma_{\textnormal{abheap}}}
\newcommand{\SigHeap}{\Sigma_{\textnormal{heap}}}
\newcommand{\SigMaltsev}{\Sigma_{\textnormal{Maltsev}}}

\title{The complexity of finding coset-generating polymorphisms and the promise metaproblem}
\titlerunning{Coset-generating Polymorphisms}

\author{Manuel Bodirsky}{Institut f\"ur Algebra, TU Dresden}{manuel.bodirsky@tu-dresden.de}{https://orcid.org/0000-0001-8228-3611}{Received funding from the ERC (Grant Agreement no. 101071674, POCOCOP). Views and opinions expressed are however
those of the authors only and do not necessarily reflect those of the European Union or the European Research
Council Executive Agency.}
\author{Armin Weiß}{FMI, University of Stuttgart}{armin.weiss@fmi.uni-stuttgart.de}{https://orcid.org/0000-0002-7645-5867}{Supported by the German Research Foundation (DFG) grant WE 6835/1-2.} % \\ Universitätsstraße 38, 70569 Stuttgart, Germany
\authorrunning{M.~Bodirsky, A.~Weiß}

\hideLIPIcs

\Copyright{Manuel Bodirsky and Armin Weiß}

\ccsdesc[500]{Theory of computation}

\keywords{constraint satisfaction problem, coset-generating polymorphisms, metaproblem, heap, abelian heap, uniform polynomial-time algorithm, \NP-hardness}
\begin{document}
	
	\maketitle
	
	\begin{abstract}
		We show that the metaproblem for coset-generating polymorphisms 
        is \NP-complete, answering a question of Chen and Larose: given a finite structure, the computational question is whether this structure
        has a polymorphism of the form $(x,y,z) \mapsto x y^{-1} z$ with respect to some group; such operations are also called \emph{coset-generating}, or \emph{heaps}.
        
		Furthermore, we introduce a promise version of the metaproblem, parametrised by two polymorphism conditions $\Sigma_1$ and $\Sigma_2$ and defined analogously to the promise constraint satisfaction problem.
        We give sufficient conditions under which the promise metaproblem for $(\Sigma_1,\Sigma_2)$ is in \P and under which it is \NP-hard.
        In particular, the promise metaproblem is in \P if $\Sigma_1$ states the existence of a Maltsev polymorphism and $\Sigma_2$ states the existence of an abelian heap polymorphism~-- despite the fact that neither the metaproblem for $\Sigma_1$ nor the metaproblem for $\Sigma_2$ is known to be in \P.
        We also show that the creation-metaproblem for Maltsev polymorphisms, under the promise that a heap polymorphism exists, is in \P if and only if there is a uniform polynomial-time algorithm for CSPs with a heap polymorphism. 
	\end{abstract}

%\tableofcontents

	\section{Introduction}
	After the resolution of the finite-domain constraint satisfaction problem (CSP) dichotomy conjecture~\cite{Zhuk20} (announced independently by Bulatov~\cite{BulatovFVConjecture} and by Zhuk~\cite{ZhukFVConjecture}), one of the central open problems in the area is whether there exists a \emph{uniform} polynomial-time algorithm for tractable CSPs, i.e., an algorithm that decides in polynomial time whether there exists a homomorphism from a given relational structure $\fA$ to a given relational structure $\fB$, under the assumption that $\CSP(\fB)$ is in \P. 
	Here the structures are given by lists of tuples, that is, one list of tuples for each of the relations.
    Neither Bulatov’s nor Zhuk’s algorithm runs uniformly in polynomial time, because the running time of these algorithms depends superpolynomially on the size of the input structure $\fB$.

	The existence of a uniform polynomial-time algorithm for tractable CSPs is already open under the much stronger assumption that the given structure $\fB$ has a \emph{Maltsev polymorphism}, i.e., a polymorphism $m \colon B^3 \to B$ satisfying $m(x,x,y) = m(y,x,x) = y$ for all $x,y \in B$. 
	An important source of such operations are groups as the operation $(x,y,z) \mapsto x y^{-1} z$ with respect to some group satisfies these identities; such operations have been called \emph{coset-generating polymorphisms} (terminology used in~\cite{MetaChenLarose}; for the motivation for this name, see \cref{lem:folklore}), which are also called \emph{heaps} (see \cref{sect:coset-gen}). 
	For structures $\fB$ with a Maltsev polymorphism, the polynomial-time tractability of $\CSP(\fB)$ has been shown prior to \cite{BulatovFVConjecture,ZhukFVConjecture} and with a simpler algorithm (Bulatov and Dalmau~\cite{Maltsev}). But again, this algorithm is \emph{not} a uniform polynomial-time algorithm. 
    Nevertheless, the algorithm of Bulatov and Dalmau is \emph{semiuniform} in the sense that, if the algorithm is not only given $\fB$ but also a Maltsev polymorphism of $\fB$, then it can answer the question whether there exists a homomorphism from $\fA$ to $\fB$ in polynomial time.\footnote{Also the polynomial-time algorithms of Zhuk and of Bulatov are semiuniform; instead of a Maltsev polymorphism, they work if a so-called  \emph{weak near unanimity polymorphism} of $\fB$ is part of the input.}
	Chen and Larose~\cite{MetaChenLarose} pointed out that a uniform polynomial-time algorithm for constraints with a Maltsev polymorphism would thus follow from the existence of a polynomial-time algorithm that takes a structure $\fB$ as input and computes a Maltsev polymorphism of $\fB$; this problem is called the \emph{creation-metaproblem} (or \emph{creation-metaquestion}, see~\cite{MetaChenLarose}) for Maltsev polymorphisms.
    Conversely, if there exists a uniform polynomial-time algorithm for structures $\fB$ with a Maltsev polymorphism, then the creation-metaproblem for Maltsev polymorphisms is in \P (a consequence of Theorem 4.7 in~\cite{MetaChenLarose}; we use \P also for function-\P). 

    Some of the recently proposed candidates for a uniform polynomial-time algorithm for tractable CSPs were refuted using structures with a coset-generating polymorphism:
    \emph{Datalog reductions to systems of linear equations over ${\mathbb Z}$} (a method proposed in~\cite{DalmauOprsalLocal}) fail to solve the CSP for a structure with the coset-generating polymorphism for the symmetric group $S_{18}$~\cite{LichterP25}, and \emph{singleton BLP+AIP} fails to solve the CSP for a structure with the coset-generating polymorphism for the dihedral group $D_8$~\cite{DimaSingleton}.
	It is thus natural to determine the complexity of the creation-metaproblem for coset-generating polymorphisms (i.e., heaps) instead of Maltsev polymorphisms and the corresponding decision problem, which is simply called the \emph{metaproblem} (or \emph{metaquestion}). For the latter, the task is to decide whether the polymorphisms of a given finite structure $\fB$ (where the relations are given explicitly by lists of tuples) satisfy a certain polymorphism condition such as having a Maltsev polymorphism, or having a coset-generating polymorphism.
	And indeed, Chen and Larose~\cite[Section 8]{MetaChenLarose} ask: ``Can anything be said about the complexity of deciding the presence of a coset-generating polymorphism?''.
	In this article we prove (in \cref{sect:coset-gen-proof}) that deciding whether a given structure has a coset-generating polymorphism is an \NP-complete problem, thus answering the question of Chen and Larose.

	\begin{theorem}\label{thm:groupCSPNPcompleteIntro}
		The following problem is \NP-complete:
		\compproblem{A finite structure $\fB$.}{Does $\fB$ have a coset-generating polymorphism?}
	\end{theorem}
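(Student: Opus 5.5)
The plan has two parts: membership in \NP, then \NP-hardness by a polynomial-time reduction.

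\textbf{Membership in \NP.} A certificate is a multiplication table of a group structure on $B$, which has size $|B|^2$. Verifying it takes polynomial time. Associativity, identity and inverses are checked in $O(|B|^3)$ steps. To check that $(x,y,z)\mapsto xy^{-1}z$ is a polymorphism, take each relation $R$ of arity $k$ and each triple of tuples $\bar a,\bar b,\bar c\in R$, and test whether the tuple $(a_i b_i^{-1} c_i)_{i\le k}$ lies in $R$. This costs $O(|R|^3\cdot k\cdot |R|)$ per relation, which is polynomial in the input size.

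\textbf{\NP-hardness: dictionary.} I will first use the characterisation from \cref{lem:folklore}. A relation $R\subseteq B^k$ is preserved by the coset-generating operation of a group $G$ on $B$ exactly when $R$ is empty or a coset of a subgroup of $G^k$. This gives two tools.
\begin{itemize}
\item A unary relation $U\subseteq B$ forces $U$ to be a coset of a subgroup, so in particular $|U|$ must divide $|G|$.
\item The graph of a bijection $f\colon B\to B$, used as a binary relation, forces $f$ to be an \emph{affine} map of the heap, that is, a translate of a group automorphism.
\end{itemize}
The metaproblem therefore asks whether $B$ admits a group structure in which a prescribed family of subsets of $B^k$ are all cosets. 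I would reduce from a standard \NP-complete problem, either \textsc{3-Colourability} or \textsc{1-in-3-SAT}, whichever gadgets turn out simpler.

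\textbf{\NP-hardness: gadgets.}
\begin{itemize}
\item \emph{Rigid part.} I add relations that pin the group to a fixed small isomorphism type up to the choice of identification, for example $(\mathbb Z_2)^m$ or $\mathbb Z_3^m$. A suitable collection of unary coset relations together with affine bijections should do this, up to the choice of which element plays the identity.
\item \emph{Variable gadget.} For each variable I use a block of elements. The block is forced to be a coset of a small subgroup, say of order $4$ in $\mathbb Z_2^m$. Inside it, the three possible partitions into two cosets of order-$2$ subgroups encode the three truth values or colours. Put differently, the choice of the unique free involution in a Klein four-group encodes a colour.
\item \emph{Edge and clause gadgets.} Binary relations that are candidate cosets of $G^2$ link two blocks. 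Such a relation is a coset exactly when the chosen subgroup structures on the two blocks correspond under a fixed affine bijection in the prescribed way, for instance ``different colour''.
\end{itemize}

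\textbf{\NP-hardness: correctness.}
\begin{itemize}
\item \emph{Completeness (solution $\Rightarrow$ polymorphism).} Given a solution, build the group explicitly as a product of the local Klein groups, oriented according to the assignment. Then check directly that every relation is a coset.
\item \emph{Soundness (polymorphism $\Rightarrow$ solution).} Given any group structure on $B$ making all relations cosets, read off the subgroup partition of each block and decode the assignment. For this, the rigid part must force each block to be a Klein four-coset, not a cyclic one. I can arrange this by including a relation whose closure forces every element to satisfy $x^{2}=e$, for instance via the affine involution $x\mapsto x^{-1}$ encoded relative to a fixed point.
\end{itemize}

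\textbf{Main obstacle.} I expect soundness to be the hard part. A general group on $B$ has far more freedom than intended: it may be non-abelian, and it need not respect the block decomposition. The gadgets therefore have to exclude every unintended group while keeping each relation small. I would first try to use unary relations to force the blocks to be the cosets of a single normal subgroup. That reduces the global question to the quotient group and to independent choices on the blocks. Only after that would I design the binary constraint gadgets, verifying by hand that the only groups of the relevant order making each gadget a coset are the intended ones.
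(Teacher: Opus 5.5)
Your membership argument is fine: guessing a group table on $B$ and checking $(x,y,z)\mapsto xy^{-1}z$ on all triples of tuples of every relation takes polynomial time. The gap is the \NP-hardness part, where you give a plan rather than a reduction. No structure is ever built from an instance, and soundness, which you yourself flag as the main obstacle, is not argued.

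Moreover, the concrete mechanism you sketch cannot carry information. A coset $K$ of a Klein four-subgroup carries a unique heap: for pairwise distinct $x,y,z\in K$, $m(x,y,z)$ is the fourth element of $K$. So all three pair-partitions of $K$ are coset partitions at once, and there is no choice to encode. The three pair-partitions of a $4$-set are distinguished only by the three $C_4$-heaps on it, which is exactly the case you want to exclude. More generally, once every block is forced to be a Klein-four coset, the heap on a product $K_1\times\dots\times K_r$ of blocks is the componentwise product of these unique heaps. So whether a relation contained in such a product is a coset does not depend on the chosen group, and your variable and edge gadgets are inert.

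Your device for forcing exponent two does not exist as stated either. Inversion depends on the unknown group, so it cannot be written down as a relation; the graph of a specific permutation only forces that permutation to be affine. Conversely, suppose you make the rigid part genuinely rigid, e.g.\ by adding the graph of the intended Maltsev operation $t$. Then $m$ commutes with $t$, whence $m(x,y,z)=m(t(x,y,y),t(y,y,y),t(y,y,z))=t(m(x,y,y),m(y,y,y),m(y,y,z))=t(x,y,z)$, and no freedom is left anywhere.

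What is missing is a way to control an arbitrary, possibly non-abelian, group on $B$. The paper does this by cardinality instead of gadgets. It takes $|B|=4p$ with $p\geq 5$ prime, so Sylow theory leaves only five candidate groups. It then uses only two-element unary relations, one per edge of a graph $\Gamma$. A set $\{a,b\}$ is a coset iff $a^{-1}b$ is an involution, which gives the following case analysis:
\begin{itemize}
\item a vertex of degree at least four rules out $C_2\times C_2\times C_p$, $C_{4p}$ and the dicyclic group, since each has at most three involutions;
\item in the faithful $C_p\rtimes C_4$, all involutions lie in the subgroup isomorphic to $D_{2p}$, which forces $\Gamma$ to be bipartite;
\item in $D_{4p}$, the two-element cosets form $K_{2p,2p}$ plus a perfect matching on each side.
\end{itemize}
Hence $\fB$ is a coset structure iff the edges of $\Gamma$ split into a bipartite graph and a matching. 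That problem is \NP-complete, by reduction from positive not-all-equal 3SAT.
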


    While this does not rule out the existence of a uniform polynomial-time algorithm for CSPs with coset-generating polymorphisms, it shows that such a possible algorithm cannot use the approach to first compute the coset-generating polymorphism and then use Bulatov and Dalmau's semiuniform algorithm~\cite{Maltsev}.

	One might ask whether the \NP-hardness result in \cref{thm:groupCSPNPcompleteIntro} still holds for coset-generating polymorphisms with respect to restricted classes of groups.
	Indeed, our proof only uses groups of order $4p$ for $p \geq 5$ prime~-- which, in particular, are all metabelian and, indeed, have an abelian normal subgroup of index 4.
	
	Coset-generating operations $f$ that come from \emph{abelian} groups are called \emph{abelian heaps}. 
	If $\fB$ has a polymorphism that is an abelian heap, then a uniform polynomial-time algorithm is known (based on \emph{affine integer programming} (AIP) introduced in~\cite{BrakensiekG20}).
	We show that this algorithm can be used to compute in polynomial time for a given finite structure $\fB$ a Maltsev polymorphism when the existence of an abelian heap polymorphism is promised.
	
	More generally, we introduce the \emph{promise-metaproblem}, which is defined analogously to the \emph{promise CSP} (for literature on this very active area, see, e.g.,~\cite{BartoBKO21,BrakensiekGWZ20}): if $\Sigma_1$ and $\Sigma_2$ are polymorphism conditions (formal definitions can be found in \cref{sect:prelims}) such that $\Sigma_1$ implies  $\Sigma_2$, then $\PMeta(\Sigma_1,\Sigma_2)$ is the problem of deciding for a given finite structure $\fB$ whether $\Pol(\fB)$ satisfies $\Sigma_1$, or whether it does not even satisfy $\Sigma_2$. 
	For instance, if $\Sigma_1$ is the condition to have an abelian heap polymorphism, and 
	$\Sigma_2$ is the condition to have a Maltsev polymorphism, 
	then the argument above shows that $\PMeta(\Sigma_1,\Sigma_2)$ is in \P despite the fact that neither of the metaproblems for $\Sigma_1$ and for $\Sigma_2$ is known to be in \P. 
	More generally, we strengthen a result of Chen and Larose~\cite[Corollary 4.9]{MetaChenLarose} about the tractability of certain metaproblems    
    and prove that $\PMeta(\Sigma_1,\Sigma_2)$ is in \P if 
	\begin{itemize}
		\item $\Sigma_1$ is idempotent and there exists a uniform polynomial-time algorithm for $\CSP(\fB)$ for structures $\fB$ with $\Pol(\fB) \models \Sigma_1$, and 
		\item $\Sigma_2$ is a \emph{linear strong Maltsev condition} (\cref{prop:inP}).\footnote{A \emph{strong Maltsev condition} is simply a finite set of height one conditions, see Section~\ref{sect:prelims} for definitions.  This is standard terminology, but we will avoid using it to prevent confusing with the Maltsev identities mentioned in the introduction. 
    Linear strong Maltsev conditions are also called \emph{minor conditions} in the Promise CSP literature.}
	\end{itemize}
    Moreover, under the same conditions and provided that there exists a semiuniform polynomial-time algorithm for CSPs with polymorphisms satisfying $\Sigma_2$, the uniform CSP for $\Sigma_1$ is in \P if and only if the creation variant of the promise metaproblem is in \P.\footnote{Promise metaproblem should not be confused with meta-problems for promise CSPs, as studied e.g.~in~\cite{Larrauri25}.}
    In particular, we conclude that there is a uniform polynomial-time algorithm for CSPs with a coset-generating polymorphism if and only if the promise creation-metaproblem for coset-generating polymorphisms and Maltsev polymorphisms is in \P.

	Also one of the main hardness results by Chen and Larose about the metaproblem can be generalized to the larger class of promise metaproblems:
	we observe that if $\Sigma_1$ and $\Sigma_2$ are \emph{non-trivial} and \emph{consistent height one} strong Maltsev conditions 
    (see  Section~\ref{sect:polymorphisms} for definitions)  
    such that $\Sigma_1$ implies $\Sigma_2$, then $\PMeta(\Sigma_1,\Sigma_2)$ is \NP-complete (\cref{prop:negative}).
  
	For a formal discussion of the complexity of uniform algorithms and metaproblems, we need to be more specific about how relational structures are given to a computer and how to measure the size of these representations. This can be found in \cref{sect:prelims}. 
    In Section~\ref{sect:meta} we introduce the promise metaproblem and promise creation-metaproblem and observe that a hardness result of Chen and Larose has a natural extension to the promise setting (Proposition~\ref{prop:negative}).
    In Section~\ref{sect:uniform} we discuss the connection between the promise metaproblem and uniform algorithms for the CSP, and prove a generalization of a known tractability result for metaproblems to promise metaproblems (Proposition \ref{prop:inP}).
	In \cref{sect:coset-gen} we give some background on coset-generating polymorphisms, and in \cref{sect:coset-gen-proof} we prove our main result about the computational complexity of the metaproblem for coset-generating polymorphisms. We conclude and discuss open problems for future research in Section~\ref{sect:open}.

	\section{Structures, Homomorphisms, Polymorphisms} 
	\label{sect:prelims}
	Let $k \in {\mathbb N}$ and $\tau \in {\mathbb N}^k$. 
	A $\tau$-structure $\fA$ consists of a set $A$ and for every $i \in \{1,\dots,k\}$ a relation $R^{\fA}_i \subseteq A^{\tau_i}$. 
	We refer to $\tau$ as the \emph{signature} of $\fA$ (of size $k$).
	All $\tau$-structures considered here will be \emph{finite}, i.e., $A$ is finite. 
	If a $\tau$-structure is considered as the input or output of some algorithm, we assume that its relations are represented by lists of tuples, i.e., the representation size of $\fA$ is $|A|+k+\sum_{i \in \{1,\dots,k\}} (\tau_i \cdot |R_i^{\fA}|)$. 
	A \emph{structure} is a $\tau$-structure for some signature $\tau$; if we later write that a computational problem takes a structure as an input, we mean that $\tau$ and $k$ are part of the input.
	
	A \emph{homomorphism} between two structures $\fA$ and $\fB$ with the same signature $\tau$ of size $k$ is a function $h \colon A \to B$ such that for every $i \in \{1,\dots,k\}$ we have 
	$(h(a_1),\dots,h(a_{\tau_i})) \in R^{\fB}_i$ whenever
	$(a_1,\dots,a_{\tau_i}) \in R^{\fA}_i$.
	If $\fA$ is a $\tau$-structure and $m \in {\mathbb N}$, then
	$\fB := \fA^m$ is the $\tau$-structure with domain $A^m$ 
	and \[R^{\fB}_i := \big \{\big((a^1_1,\dots,a^m_1),\dots,(a^1_{\tau_i},\dots,a^m_{\tau_i})\big) \mid (a^j_1,\dots,a^j_{\tau_i}) \in R_i^{\fA} \text{ for all } j \in \{1,\dots, m\} \big \}.\]
	A \emph{polymorphism} of a structure $\fB$ is a homomorphism from $\fB^m$ to $\fB$, for some $m \in {\mathbb N}$.
	In other words, a polymorphism is an operation $B^m \to B$ for some $m$ that, applied componentwise, preserves the relations of $\fB$. 
	Note that the set of all polymorphisms of $\fB$ is a \emph{clone}, i.e., it contains the projections and is closed under composition. 
	A finite structure $\fB$ is called a \emph{core} if every endomorphism\footnote{An endomorphism of $\fB$ is a homomorphism from $\fB$ to $\fB$, i.e., a unary polymorphism.} of $\fB$ is	an automorphism of $\fB$.

	The problem of deciding whether there exists a homomorphism
	from a $\tau$-structure $\fA$ to a $\tau$-structure $\fB$ is \NP-complete. 
	This problem is sometimes called the \emph{uniform CSP} (with both $\fA$ and $\fB$ being the input). 
	If $\fB$ is a fixed $\tau$-structure, then $\CSP(\fB)$ is the problem of deciding whether a given $\tau$-structure $\fA$ has a homomorphism to $\fB$; this is often referred to as the \emph{non-uniform CSP}. 
	The finite-domain CSP dichotomy states that $\CSP(\fB)$ is in \P if $\fB$ has a \emph{4-ary Siggers polymorphism},
	i.e., a polymorphism $s$ satisfying for all $a,r,e \in B$ 
	\[s(a,r,e,a)=s(r,a,r,e),\]
	and is \NP-complete otherwise. This specific identity was found by Siggers~\cite{Siggers}, building on results from~\cite{BartoKozikNiven}. The result proved there also implies that if a structure has a Maltsev polymorphism, then it also has a 4-ary Siggers polymorphism.
	Note that unlike some older articles, we do not require that Siggers operations are idempotent. We also mention that for finite structures $\fB$ the following are equivalent. 
	\begin{itemize}
		\item $\fB$ has a Siggers polymorphism;
		\item $\fB$ has a \emph{cyclic polymorphism}~\cite{Cyclic}, i.e., a polymorphism $c \colon B^k \to B$ for some $k \geq 2$ satisfying for all $x_1,\dots,x_k \in B$
		\[ c(x_1,\dots,x_{k}) = c(x_2,\dots,x_{k},x_1).\]
		\item $\fB$ has a \emph{weak near-unanimity polymorphism}~\cite{MarotiMcKenzie}, i.e., a polymorphism $w \colon B^k \to B$ for some $k \geq 2$ satisfying for all $x,y \in B$
		\[ w(y,x,\dots,x) = w(x,y,x,\dots,x) = \cdots = w(x,\dots,x,y).\]
	\end{itemize}
    For every fixed pair of relational $\tau$-structures $\fB$ and $\fC$ such that $\fB$ has a homomorphism to $\fC$,
	the \emph{Promise CSP for $(\fB,\fC)$}, denoted by $\PCSP(\fB,\fC)$, is defined as follows.
	
	\comproblem{A finite $\tau$-structure $\fA$.}
	{`Yes' if $\fA$ has a homomorphism to $\fB$,
		`No' if $\fA$ has no homomorphism to $\fC$.} 
	
	If $\fA$ has a homomorphism to $\fC$ but no homomorphism to $\fB$, then the algorithm solving $\PCSP(\fB,\fC)$ might answer arbitrarily. 
	Despite a powerful theory that is available for studying the complexity of promise CSPs~\cite{BartoBKO21,BrakensiekGWZ20,BrakGuruAlgebraic}, the complexity of $\PCSP(\fB,\fC)$ has not been classified for all pairs $(\fB,\fC)$. 
	For instance,
    if $K_{\ell}$ is the clique with $\ell$ vertices, then the complexity of 
    $\PCSP(K_n,K_m)$ has already been studied in~\cite{GareyJohnssonGraphCol}, and is conjectured to be \NP-hard for all $3 \leq \;  n<m$, but this remains open. There are problems of the form $\PCSP(\fA,\fB)$ in \P where $\CSP(\fA)$ and $\CSP(\fB)$ are \NP-hard~\cite{BrakensiekGuruswami19}. 
	
    \section{Polymorphism Conditions}\label{sect:polymorphisms}

	Let $\sigma$ be a set of function symbols.
	A \emph{$\sigma$-identity} is an expression of the form $s \approx t$ where $s$ and $t$ are $\sigma$-terms over some common set of variables $x_1,\dots,x_n$. 
    Such an identity is called 
    \begin{itemize}
    		\item \emph{linear} if each of the two terms contains at most one function symbol from $\sigma$;
		\item \emph{height one} if exactly one symbol from $\sigma$ occurs in each of the terms $s$ and $t$. 
    \end{itemize}   
    If $\Sigma$ is a (not necessarily finite) set of $\sigma$-identities 
	and $\sC$ is a set of operations over $B$, 
	then we write $\sC \models \Sigma$ if 
	we can interpret the function symbols from $\sigma$
	by operations from $\sC$ such that 
	for every identity $s \approx t$ from $\Sigma$ 
	and every instantiation of the variables in $s$ and $t$ by elements $b_1,\dots,b_n$ from $B$ we have that $s(b_1,\dots,b_n) = t(b_1,\dots,b_n)$.
	Otherwise, we write $\sC \not \models \Sigma$. 
	If $\Sigma = \{s \approx t\}$ consists of a single identity, we also write $\mathscr C \models s \approx t$ instead of $\mathscr C \models \{s \approx t\}$. 
	If $\Sigma_1$ and $\Sigma_2$ are sets of identities, we write $\Sigma_1 \models \Sigma_2$ if 
	for every clone $\sC$ with a finite domain we have that 
	$\sC \models \Sigma_2$ whenever 
	$\sC \models \Sigma_1$. 
	A set of $\sigma$-identities $\Sigma$ is called 
	\begin{itemize}
		\item \emph{idempotent} if $\Sigma$ contains for every $f \in \sigma$ the identity $f(x,\dots,x) \approx x$;
		\item \emph{trivial} if $\Proj \models \Sigma$, where $\Proj$ is the set of all projections over the set $\{0,1\}$ (and \emph{non-trivial} otherwise);
		\item \emph{consistent} if for every non-empty finite set $D$ there exist idempotent operations on $D$ that satisfy $\Sigma$ (we use the definition  from~\cite{MetaChenLarose}; also see Lemma~\ref{lem:consistent}).
	\end{itemize}

    We say a function $g \colon D^k \to D$ is an \emph{extension} of $f \colon C^k\to C$ for $C \sse D$ if $g_{|C^k} = f$ where $g_{|C^k}$ denotes the restriction of $g$ to $C^k$.

    \begin{lemma}\label{lem:extend_domain}
		Let $\Sigma$ be a set of linear $\sigma$-identities such that $\Sigma \not \models x \approx y$ and let $\mathscr C$ be a clone on some domain $C$ such that $\sC \models \Sigma$.
		Moreover, let $D = C \cup \{a\}$ for some $a \notin C$ and let $\sD$ be the set of all operations that extend some $f \in \sC$ (\ie\ $\sD = \{\hat f \colon D^k \to D \mid k\in \N, \hat f_{|C^k} \in \sC\}$).
        Then $\sD$ is a clone and $\sD \models \Sigma$.
	\end{lemma}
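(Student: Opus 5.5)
The plan has two parts: first check that $\sD$ is closed under the clone operations, then build explicit extensions that satisfy $\Sigma$. The key idea for the second part is to pull every evaluation on $D$ back into a model of $\Sigma$ along maps that fix $C$, and then push it forward again.

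\emph{$\sD$ is a clone.} A projection on $D$ restricts to a projection on $C$, and those lie in $\sC$. For composition, take $\hat f \in \sD$ of arity $n$ and $\hat g_1,\dots,\hat g_n\in\sD$ of arity $k$. Each $(\hat g_i)_{|C^k}$ belongs to $\sC$, so it maps $C^k$ into $C$. Hence
\[
\bigl(\hat f(\hat g_1,\dots,\hat g_n)\bigr)_{|C^k} = f(g_1,\dots,g_n),
\]
where $f=\hat f_{|C^n}$ and $g_i=(\hat g_i)_{|C^k}$. This lies in $\sC$ because $\sC$ is a clone.

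\emph{Extensions satisfying $\Sigma$, case $|C|\geq 2$.} Fix interpretations $f\in\sC$ of the symbols of $\sigma$ that witness $\sC\models\Sigma$. Choose distinct $c_1,c_2\in C$ and let $r_j\colon D\to C$ be the identity on $C$ with $r_j(a)=c_j$. Define
\[
\hat f(t) = \begin{cases} a & \text{if } f(r_1(t))\neq f(r_2(t)),\\ f(r_1(t)) & \text{otherwise,}\end{cases}
\]
where $r_j$ is applied componentwise. Since $r_j$ fixes $C$, we get $\hat f_{|C^k}=f$, so $\hat f\in\sD$.

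Now observe that for any linear term $s$ (a variable, or a single symbol applied to variables) and any assignment $b$ of the variables in $D$, the value $s^{\sD}(b)$ equals $a$ if $s^{\sC}(r_1 b)\neq s^{\sC}(r_2 b)$, and equals the common value otherwise. For $s=\hat f(\dots)$ this is the definition. For $s=x$ it holds because $r_1(x)\neq r_2(x)$ exactly when $x=a$.

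Given an identity $s\approx t$ in $\Sigma$, we have $s^{\sC}(r_j b)=t^{\sC}(r_j b)$ for $j=1,2$, since $\sC\models\Sigma$. By the observation, $s^{\sD}(b)=t^{\sD}(b)$. So this case does not need the hypothesis $\Sigma\not\models x\approx y$.

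\emph{Case $C=\{c\}$.} This is where $\Sigma\not\models x\approx y$ is used, and it is the main obstacle. Since the hypothesis holds, some clone $\mathscr E$ on a finite set $E$ with two distinct elements $e_1,e_2$ satisfies $\Sigma$, via interpretations $f^E$. Let $\varphi\colon D\to E$ send $c\mapsto e_1$ and $a\mapsto e_2$. Let $\psi\colon E\to D$ send $e_1\mapsto c$ and every other element to $a$, so that $\psi\circ\varphi=\mathrm{id}_D$. Set
\[
\hat f(t)=\psi\bigl(f^E(\varphi(t))\bigr).
\]

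Then $s^{\sD}(b)=\psi\bigl(s^{\mathscr E}(\varphi b)\bigr)$ for every linear term $s$: for a variable this uses $\psi\varphi=\mathrm{id}$, and for a symbol it is the definition. So every identity of $\Sigma$ transfers from $\mathscr E$ to these operations on $D$. The restriction to $C^k$ is the unique operation on $\{c\}$, which lies in $\sC$ because $\sC$ is a clone. Hence $\sD\models\Sigma$ in this case too.
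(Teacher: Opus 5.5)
Your proof that $\sD$ is a clone and your case $|C|\geq 2$ are correct; the embedding into $C^2$ with a retraction is a cleaner semantic alternative to the paper's syntactic closure $\equiv$. The gap is in the case $C=\{c\}$: you never check that $\hat f$ actually lies in $\sD$. We have $\hat f(c,\dots,c)=\psi\bigl(f^E(e_1,\dots,e_1)\bigr)$. Since $\psi$ sends every element other than $e_1$ to $a$, this equals $a$ as soon as $f^E(e_1,\dots,e_1)\neq e_1$. In that case $\hat f_{|C^k}$ takes the value $a\notin C$, so it is not an operation on $C$ and not an element of $\sC$. Your sentence ``the restriction to $C^k$ is the unique operation on $\{c\}$'' silently assumes $\hat f(c,\dots,c)=c$, and nothing in your choice of $\mathscr E$, $e_1$, $e_2$ guarantees it. For example, take $\Sigma=\{f(x)\approx g(y),\ h(x)\approx k(y)\}$ and the model on $E=\{0,1\}$ where $f,g$ are constantly $0$ and $h,k$ are constantly $1$. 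Then every choice of $e_1$ is moved by $f$ or by $h$. This is exactly where the lemma has content for $|C|=1$: you need a two-element model of $\Sigma$ in which all operations fix a common element.

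The repair stays within your framework. Let $\psi$ send only $e_2$ to $a$ and all other elements of $E$ to $c$. You still have $\psi\circ\varphi=\mathrm{id}_D$, so your transfer argument for linear terms is unchanged. Now choose $e_2\notin\{e_1\}\cup\{f^E(e_1,\dots,e_1)\mid f\in\sigma\}$; then $\hat f(c,\dots,c)=c$ for every $f$. Such an $e_2$ exists once you replace $\mathscr E$ by a power $\mathscr E^n$ with $n>|E|$, which still satisfies $\Sigma$. Each element $f^{E^n}(e_1,\dots,e_1)$ arises by applying the unary map $x\mapsto f^E(x,\dots,x)$ coordinatewise, so there are at most $|E|^{|E|}$ of them, while $|E^n|>|E|^{|E|}+1$. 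Alternatively, the paper's definition handles this case directly: set $\hat f(a_1,\dots,a_k)=a_i$ when $\Sigma$ forces $f$ to be the $i$-th projection on that equality pattern, and $c$ otherwise.
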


    \begin{proof} 
        First, let us observe that $\sD$ is a clone: Clearly, all projections are in $\sD$. Moreover, if $\hat f, \hat g_1, \dots, \hat g_k \in \sD$, then we have $ f,  g_1, \dots,  g_k \in \sC$ where $f, g_1, \dots, g_k$ denote the respective restrictions to $C^\ell$ for suitable $\ell$. Hence, because $\sC$ is a clone, also their composition is in $\sC$ (which is the same as the restriction of $\hat f( \hat g_1, \dots, \hat g_k) $); hence, also $\hat f( \hat g_1, \dots, \hat g_k) \in \sD$ and $\sD$ is a clone.

		To show that $\sD \models \Sigma$, we define an equivalence relation $\equiv$ on the set of linear $\sigma$-terms over a countable set of  variables $\cX$.
        Note that a linear $\sigma$-term is either a single variable or a term of the form $f(x_1, \dots, x_n)$ where $f \in \sigma$ is an $n$-ary function symbol and $x_1, \dots, x_n \in \cX$ (possibly with repetitions).
        We can also denote such a linear term in the form $s(y_1, \dots, y_k) = f(y_{i_1}, \dots, y_{i_n})$ where $y_1, \dots, y_k$ are pairwise distinct variables and $i_1, \dots, i_n \in \{1, \dots, k\}$.

        We define $\equiv$ to be the smallest equivalence relation which contains $\Sigma$ (recall that formally $\Sigma$ is a set of pairs of linear $\sigma$-terms) and which is closed under substituting variables for other variables.
        Thus, in particular, we have $\Sigma \models \{s \approx t \mid s \equiv t\}$.
		Observe that 
        \begin{enumerate}[(1)]
            \item\label{point1} as $\Sigma \not \models x \approx y$, we cannot have both $ f(x_1, \dots, x_k) \equiv x_i$ and $ f (x_1, \dots, x_k) \equiv x_j$ if $x_i \neq x_j$.
        \end{enumerate}
	  
        We may identify the function symbols in $\sigma$ with operations in $\sC$ that witness
        that $\sC$ satisfies  $\Sigma$.
        Using this identification, we further observe that,
        \begin{enumerate}[(1)]
            \setcounter{enumi}{1}
            \item\label{point2}  if $s(y_1, \dots, y_k) $ is a linear term 
            where $y_1, \dots, y_k \in \cX$ are distinct variables and if $s (y_1, \dots, y_k) \equiv y_i$ for some $i \in \{1,\dots,k\}$, then $s(c_1, \dots, c_k) = c_i$ for all $c_1, \dots, c_k \in C$.
        \end{enumerate}

		Now, for each $f\in \sigma$ identified with $f \colon C^k \to C$, let us describe an extension $\hat f \colon D^k \to D$ such that the resulting set of functions satisfies $\Sigma$.
        In order to do so, fix a map $\pi \colon D \to C$ which is the identity on $C$. 
        For $f\in \sigma$ and $a_1,\dots,a_k \in D$ we define 
        \[
            \hat f(a_1, \dots, a_k) :=
            \begin{cases}
                a_i & \parbox[t]{0.5\linewidth}{%
                        if there exists an injective $\xi \colon \{a_1, \dots, a_k\} \to \cX$\newline
                        such that
                        $f(\xi(a_1), \dots, \xi(a_k)) \equiv \xi(a_i)$
                    } \\
                f(\pi(a_1), \dots, \pi(a_k)) & \text{otherwise.} \vphantom{k^{k^{k^{k^k}}}}
            \end{cases}
        \]
        
 		By \eqref{point1}, it follows that $\hat f$ is well-defined (note that the definition depends on $\pi$, but $\pi$ is fixed~-- however, it does not depend on $\xi$, because the relation $\equiv$ is invariant under substitution of variables).
		Moreover, if $a_1, \dots, a_k \in C$, we have $\hat f(a_1, \dots, a_k) = f(a_1, \dots, a_k)$ (in the case that there exists an injective $\xi \colon \{a_1,\dots,a_k\} \to \cX$ with $f(\xi(a_1),\dots,\xi(a_k)) \equiv \xi(a_i)$,
        we rely on \eqref{point2}).
        
		Finally, the collection of all the $\hat f$ for $f\in \sigma$ satisfies $\Sigma$.	
        For identities involving  $f (x_1, \dots, x_k)$ with $f (x_1, \dots, x_k) \equiv x_i$, this is because we defined $\hat f(a_1, \dots, a_k) = a_i$; for other identities, this follows immediately,  because the set of all $f$'s (with $f\in \sigma$) already satisfies $\Sigma$.
	\end{proof}

    The purpose of the following lemma is to explain the concept of `consistency' for sets of linear identities; the second statement illustrates why we cannot drop the idempotence assumption in the first statement.
	  \begin{lemma}\label{lem:consistent}
		An idempotent set of linear identities $\Sigma$ is consistent if and only if $\Sigma \not \models x \approx y$.
		A set of height one identities $\Sigma$ is consistent if and only if $\Sigma \not \models f(x) \approx g(y)$. 
	\end{lemma}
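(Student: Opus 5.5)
The two statements are handled separately; the main tool for the first is \cref{lem:extend_domain}.

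\emph{First statement.} The forward direction is immediate. Take $D$ with $|D| \geq 2$ and idempotent operations on $D$ satisfying $\Sigma$. These operations generate a clone $\sC$ with $\sC \models \Sigma$. The identity $x \approx y$ fails on $D$, so $\Sigma \not\models x \approx y$.

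For the converse, assume $\Sigma \not\models x \approx y$. We must produce idempotent operations satisfying $\Sigma$ on every non-empty finite set $D$, and we do this by induction on $|D|$.

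\begin{itemize}
\item \emph{Base case.} On a one-element set, the unique operations are idempotent and satisfy everything.
\item \emph{Inductive step.} Suppose $C$ carries idempotent operations $f$ ($f \in \sigma$) satisfying $\Sigma$, and let $\sC$ be the clone they generate. Set $D = C \cup \{a\}$. \cref{lem:extend_domain} applies and yields extensions $\hat f$ satisfying $\Sigma$.
\item \emph{Idempotence of the extensions.} On $C$ this holds because $\hat f$ extends $f$. At the new point, $\hat f(a,\dots,a)$ is determined by the first case of the definition. Since $\Sigma$ is idempotent, $f(x,\dots,x) \approx x$ lies in $\Sigma$, so $f(\xi(a),\dots,\xi(a)) \equiv \xi(a)$, and hence $\hat f(a,\dots,a) = a$.
\end{itemize}

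The main point to check is exactly this last step: that the explicit construction in the proof of \cref{lem:extend_domain}, not merely its statement, preserves idempotence. Every finite set is reached by repeatedly adding one point, so the induction covers all $D$.

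\emph{Second statement.} Here $\Sigma$ consists of height one identities.

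\begin{itemize}
\item \emph{Forward direction.} Suppose $\Sigma$ is consistent. Take $|D| \geq 2$ and idempotent operations on $D$ satisfying $\Sigma$. Then $f(x) \approx g(y)$ fails: instantiate $x \neq y$ and use $f(x,\dots,x) = x$ and $g(y,\dots,y) = y$. Here $f(x)$ abbreviates $f(x,\dots,x)$.
\item \emph{Converse, reduction.} Suppose $\Sigma \not\models f(x) \approx g(y)$ for all symbols $f,g$. Consider $\Sigma' = \Sigma \cup \{f(x,\dots,x) \approx x \mid f \in \sigma\}$, which is idempotent and linear. By the first part, it suffices to show $\Sigma' \not\models x \approx y$.
\item \emph{Converse, key claim.} Use the syntactic closure $\equiv$ from the proof of \cref{lem:extend_domain}, applied to $\Sigma$. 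Since $\Sigma$ is height one, the equivalence class of a term $f(\dots)$ never contains a bare variable, so $\equiv$ alone never forces collapse. Adding the idempotence identities only adds classes containing $f(x,\dots,x)$ together with $x$. A collapse $x \equiv y$ would therefore require a chain $x \equiv f(x,\dots,x) \equiv_\Sigma g(y,\dots,y) \equiv y$. That chain would witness that $\Sigma$ implies an identity of the form $f(x) \approx g(y)$, contradicting the hypothesis. (If the chain's endpoints had other variable patterns, substitute variables to bring it to this form.)
\end{itemize}

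The hard step is making this rigorous, because the first part's proof converts $\not\models$ into syntactic non-derivability. I would instead argue semantically. Suppose $\Sigma$ has no model satisfying $f(x) \approx g(y)$; in other words, it has a finite model $\sC$ in which no $f(x,\dots,x)$ is constant across different $x$ in the required way. From $\sC$, take a model on a two-element or larger set in which the diagonal maps $x \mapsto f(x,\dots,x)$ are non-constant. Then reparametrise by composing with a permutation so that these diagonals become identities, which is possible in the height-one setting, and check that this yields a model of $\Sigma'$ violating $x \approx y$. I expect this semantic step to be the main obstacle, and I would fall back on the syntactic chain argument if the reparametrisation does not go through.
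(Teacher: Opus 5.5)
Your proof of the first statement and of the forward direction of the second is correct and follows the paper. The separate idempotence check at the new point is not even needed: $\Sigma$ contains $f(x,\dots,x)\approx x$, so any extensions satisfying $\Sigma$ are automatically idempotent. The reduction of the converse to $\Sigma'\not\models x\approx y$ is also sound, and so is your chain argument.

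The gap lies between these two steps. The chain argument only shows $x\not\equiv' y$ for the syntactic closure $\equiv'$ of $\Sigma'$, i.e.\ that $x\approx y$ is not \emph{derivable}. The first statement, through the \emph{statement} of \cref{lem:extend_domain}, needs $\Sigma'\not\models x\approx y$, i.e.\ an actual model of $\Sigma'$ on a set with at least two elements. Producing such a model is the real content of the converse, and your proposal leaves it open. The paper gets it from an external result: the clone of all idempotent operations on a two-element set satisfies $\Sigma\cup\Delta$ (Proposition~3.2.5 in Vucaj's thesis).

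Your semantic fallback does not close the gap as written. A diagonal $d_f\colon x\mapsto f(x,\dots,x)$ that is merely non-constant need not be injective, and then no composition with a permutation turns it into the identity. To make that route work you would do the following:
\begin{itemize}
\item Take a finite clone $\sC\models\Sigma$ with no constant operation. This is exactly what $\Sigma\not\models f(x)\approx g(y)$ means under the paper's definition of $\models$, where $f,g$ are interpreted independently of $\Sigma$.
\item Pick a unary $e\in\sC$ with $e\circ e=e$ and minimal image $E$, so $|E|\geq 2$, and replace each $f$ by $e\circ f$ restricted to $E$. Minimality makes every $e\circ d_f|_E$ a permutation of $E$.
\item Identifying all variables in a height-one identity shows that linked symbols have the same diagonal. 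So composing with the inverse of that common permutation yields idempotent operations on $E$ satisfying $\Sigma$.
\end{itemize}

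A cheaper repair stays within your own route. The proof of \cref{lem:extend_domain} uses the hypothesis $\Sigma\not\models x\approx y$ only through point (1). That point is a purely syntactic consequence of $x\not\equiv y$, which is exactly what your chain argument establishes for $\Sigma'$. Run that proof for $\Sigma'$, starting from the one-element model. It yields models of $\Sigma'$, hence idempotent models of $\Sigma$, on every finite set. With this observation your argument becomes a complete, self-contained alternative to the paper's citation.
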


	\begin{proof} 
 		Clearly, if $\Sigma \models x \approx y$, then all finite models of $\Sigma$ have a domain of size one. Hence, if $D$ has cardinality $n \geq 2$, then there is no set of operations on $D$ that satisfies $\Sigma$, so $\Sigma$ is not consistent. 
		On the other hand, if $\Sigma \not \models x \approx y$, by \cref{lem:extend_domain}, starting with a one-element domain, we find clones with arbitrary domain size that satisfy $\Sigma$. 
        If $\Sigma$ is idempotent, then we even find idempotent operations that satisfy $\Sigma$, so $\Sigma$ is consistent.
		
	    To see the second statement, observe that if $\Sigma \models f(x) \approx g(y)$ 
		and $\sC$ is a clone such that $\sC \models \Sigma$, then $\sC$ must contain a constant unary operation, which can only be idempotent if the domain $D$ of $\sC$ has at most one element. So $\Sigma$ is not consistent. Conversely, 
		suppose that 
		$\Sigma \not \models f(x) \approx g(y)$.
		It is known that in this case 
		${\mathscr D} \models \Sigma \cup \Delta$, where 
		${\mathscr D}$ is the clone of all idempotent operations on a two-element set $D$ (Proposition 3.2.5 in~\cite{VucajThesis}), and $\Delta$ are equations that express 
        that all functions mentioned in $\Sigma$ are idempotent.
        Let $S$ be a finite superset of $D$; then applying 
        Lemma~\ref{lem:extend_domain} 
        sufficiently many times we obtain 
        a clone on $S$ which also satisfies $\Sigma \cup \Delta$. Hence, $\Sigma$ is consistent. 
	\end{proof}
    
	Note that the existence of a Siggers polymorphism can be expressed by the height one identity 
	\[f(a,r,e,a) \approx f(r,a,r,e)\]
	and the existence of a Maltsev polymorphism can be expressed by the  idempotent two-element set of linear identities
	\begin{align}
        \SigMaltsev :=\{ m(x,x,y) \approx y, m(y,x,x) \approx y\}. 
		\label{eq:malt}
	\end{align}
	Both of these sets of identities are non-trivial and consistent.

\section{The Promise Metaproblem}\label{sect:meta}

	If $\Sigma$ is a set of $\sigma$-identities, 
	then the \emph{metaproblem for $\Sigma$}, denoted by $\Meta(\Sigma)$, is the following computational problem. 
	
	\comproblem{A finite structure $\fB$.}{`Yes' if $\Pol(\fB) \models \Sigma$, `No' otherwise.}
	
	The following was shown by Chen and Larose~\cite{MetaChenLarose}.
	
	\begin{theorem}[Theorem 6.2 in~\cite{MetaChenLarose}]
		\label{thm:h1} 
		Let $\Sigma$ be a non-trivial consistent finite set of height one identities. Then $\Meta(\Sigma)$ is \NP-complete. 
	\end{theorem}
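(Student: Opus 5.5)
Membership in \NP and \NP-hardness are handled separately.

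\emph{Membership in \NP.} Since $\Sigma$ is a fixed finite set of height one identities, a certificate consists of one operation $f^{\fB}\colon B^{k_f}\to B$ for each symbol $f$ in $\Sigma$. The arities $k_f$ are constants, so each table has size $|B|^{k_f}$, which is polynomial in the representation size of $\fB$. We check two things in polynomial time:
\begin{itemize}
\item each table is a polymorphism: for every relation $R_i$ and every choice of $k_f$ tuples from $R_i^{\fB}$, the componentwise image lies in $R_i^{\fB}$, which costs $|R_i^{\fB}|^{k_f}$ checks;
\item every identity of $\Sigma$ holds: check all instantiations of its boundedly many variables.
\end{itemize}

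\emph{\NP-hardness.} I will reduce from a known \NP-hard CSP. Since $\Sigma$ is non-trivial, $\Proj \not\models \Sigma$. Take $\fA_0$ to be a fixed structure whose polymorphism clone is essentially projective, for example $K_3$ or 1-in-3-SAT on $\{0,1\}$ with constants. Then $\Pol(\fA_0)\not\models\Sigma$ and $\CSP(\fA_0)$ is \NP-hard. Given an instance $\fI$ of $\CSP(\fA_0)$, I construct a structure $\fB$ in polynomial time such that $\fI\to\fA_0$ if and only if $\Pol(\fB)\models\Sigma$. The construction will be a disjoint-union / ``product with a gadget'' construction, so that polymorphisms of $\fB$ behave like projections on a copy of $\fA_0$ in the presence of a solution and are forced otherwise.

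A cleaner route, in the spirit of Chen--Larose, builds $\fB$ from $\fI$ itself, so that its polymorphisms encode solutions. Consistency is used at the following point. Since $\Sigma$ is height one and $\Sigma\not\models f(x)\approx g(y)$ by \cref{lem:consistent}, on any domain there exist idempotent operations satisfying $\Sigma$. The plan therefore has four steps:
\begin{enumerate}
\item Let $\fB$ consist of $\fI$'s domain together with relations that force the polymorphisms to be idempotent.
\item If $\fI\to\fA_0$, use the homomorphism together with a fixed idempotent model of $\Sigma$ on $A_0$ (available by consistency) to define, via the homomorphism, polymorphisms of $\fB$ satisfying $\Sigma$.
\item Conversely, suppose $\Pol(\fB)\models\Sigma$. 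Restrict or compose the witnessing operations to a copy of $\fA_0$ inside $\fB$ to obtain operations on $\fA_0$ satisfying $\Sigma$ minorwise.
\item The operations from step 3 would contradict non-triviality unless a homomorphism $\fI\to\fA_0$ exists.
\end{enumerate}

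The main obstacle is the precise gadget that makes both directions work simultaneously. Step 2 needs the idempotent model to extend to all of $\fB$, which is where I would invoke \cref{lem:extend_domain}, adding points one by one. Step 3 needs $\fB$ to pp-construct $\fA_0$ in a way that transfers $\Sigma$ back. I would first try the concrete choice $\fB=\fI\times\fA_0$-style: take $\fB=\fA_0\uplus\fI$ augmented with the graph relation of a candidate map. Height one identities are preserved under pp-constructions, so the problem reduces to checking that this structure is pp-constructible from $\fB$ exactly when a solution exists.
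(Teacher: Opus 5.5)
\textbf{The hardness part has a genuine gap.} Your \NP-membership argument is correct. For hardness, however, you give no reduction: you leave the gadget open yourself, and each of the steps you do commit to fails.

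\emph{Step~1 already breaks the equivalence.} If $\fB$ is $\fI$ together with relations that force idempotence, then $\Pol(\fB)$ is the clone of idempotent polymorphisms of $\fI$. Take the yes-instance $\fI=\fA_0=K_3$. Its idempotent polymorphisms are projections only, so $\Pol(\fB)\not\models\Sigma$ by non-triviality.

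\emph{Step~2 has no mechanism.} An idempotent model of $\Sigma$ on $A_0$ does not preserve the relations of $\fA_0$, since $\Pol(\fA_0)\not\models\Sigma$. Composing it with $h\colon I\to A_0$ does not even give operations on $B$. \cref{lem:extend_domain} does not help either: it yields polymorphisms only when the added points play no role in the relations (as in \cref{obs:notMaltsev}).

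\emph{Steps~3--4 make every instance a no-instance.} If $\fB$ contains a copy of $\fA_0$ to which its polymorphisms restrict, then $\Pol(\fB)\models\Sigma$ would give $\Pol(\fA_0)\models\Sigma$. That is false whether or not $\fI\to\fA_0$.

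\emph{Your final suggestion has the polarity reversed.} A pp-construction of $\fA_0$ from $\fB$ certifies $\Pol(\fB)\not\models\Sigma$, so it must exist exactly when there is \emph{no} solution. Its mere absence does not give $\Sigma$. Moreover, a reduction cannot build in the graph of a ``candidate map'' without already knowing a solution.

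\textbf{The missing idea.} For height one $\Sigma$, whether $\Pol(\fB)\models\Sigma$ depends only on the core of $\fB$, because height one identities transfer between homomorphically equivalent structures. Forcing idempotence makes $\fB$ a (rigid) core and so removes exactly this lever. In fact, by \cref{prop:inP}, a reduction of your form for the Siggers identity would show that, unless $\P=\NP$, there is no uniform polynomial-time algorithm for CSPs with an idempotent Siggers polymorphism. That is an open problem, so this route is not credible.

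\textbf{The argument the paper relies on.} This is Chen and Larose's Lemma~6.6, sketched in the proof of \cref{prop:negative}. It reduces from 3-colorability to a structure $\fB$ that fails to be a core exactly when $G$ is 3-colorable.
\begin{itemize}
\item If $G$ is 3-colorable, the core of $\fB$ has polymorphisms satisfying every consistent idempotent set of linear identities. Consistency then gives $\Sigma$ on the core, and hence on $\fB$.
\item If $G$ is not 3-colorable, $\Pol(\fB)$ satisfies no non-trivial height one identity, so non-triviality gives $\Pol(\fB)\not\models\Sigma$.
\end{itemize}
So consistency is used on the yes-side, to make a small core rich, and non-triviality on the no-side. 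Neither is used by transporting an idempotent model along a homomorphism.
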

	
	In particular, deciding whether $\fB$ has a Siggers polymorphism is \NP-complete~\cite{MetaChenLarose}.
	Note, however, that this theorem does not make any statement about deciding the existence of an idempotent Siggers polymorphism or the existence of a Maltsev polymorphism (since the idempotence of an operation cannot be expressed by a set of height one identities).

	If $\fB$ is additionally promised to be a core,
	then it is easy to see that there exists a Siggers polymorphism if and only if there exists an idempotent Siggers polymorphism.
	We do not know whether there exists a polynomial-time algorithm to test whether $\fB$ has an idempotent Siggers polymorphism, and the same question is open for Maltsev polymorphisms.
	
	Similarly to the case of the CSP, 
	we can relax the problem to a promise version as follows. Let $\Sigma_1$ and $\Sigma_2$ be 
	sets of $\sigma$-identities such that 
	$\Sigma_1 \models \Sigma_2$. 
	Then the \emph{promise metaproblem for $(\Sigma_1,\Sigma_2)$}, denoted by $\PMeta(\Sigma_1,\Sigma_2)$, is the following computational problem. 
	
	\comproblem{A finite structure $\fB$.}{
		`Yes' if $\Pol(\fB) \models \Sigma_1$, `No' if $\Pol(\fB) \not \models \Sigma_2$.}
    Moreover, if the set $\tau \subseteq \sigma$ of symbols that appear in $\Sigma_2$ 
    is finite, then the \emph{promise creation-metaproblem for $(\Sigma_1,\Sigma_2)$}, denoted by $\PCreaMeta(\Sigma_1,\Sigma_2)$, is the following computational problem. 
    \comproblem{A finite structure $\fB$.}{
		A map $f \colon \tau \to \Pol(\fB)$ witnessing that $\Pol(\fB) \models \Sigma_2$, 
        if $\Pol(\fB) \models \Sigma_1$.\newline
        If $\Pol(\fB) \not \models \Sigma_1$, the output may be anything.}
    Thus, the output consists of an interpretation of each function symbol in $\tau$.
    The following observation explains the name $\PCreaMeta$.
    
    \begin{observation}\label{obs:pcMetaToPMeta}
       If $\Sigma_2$ is finite and $\PCreaMeta(\Sigma_1,\Sigma_2)$ is in \P, then  $\PMeta(\Sigma_1,\Sigma_2)$ is in \P. 
    \end{observation}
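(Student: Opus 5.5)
The plan is to turn a polynomial-time algorithm $\mathcal{A}$ for $\PCreaMeta(\Sigma_1,\Sigma_2)$ into a decision procedure. We run $\mathcal{A}$ on the input $\fB$ and then check its output in polynomial time. The symbol set $\tau$ of $\Sigma_2$ is finite, and $\Sigma_2$ itself is a fixed finite set of identities. So we may fix a bound $k$ on the arities of the symbols in $\tau$ and a bound $n$ on the number of variables occurring in any identity of $\Sigma_2$; both are constants independent of $\fB$.

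\textbf{Step 1 (run and parse).} Run $\mathcal{A}$ on $\fB$ with a clock. Its running time is bounded by some polynomial $q$ in the representation size of $\fB$. If $\mathcal{A}$ does not halt within $q(|\fB|)$ steps, or its output is not syntactically of the right form, reject. The right form is: for each $f\in\tau$ of arity $k_f$, a full table of an operation $B^{k_f}\to B$. Such a table has $|B|^{k_f}\le |B|^k$ entries, which is polynomial because $k$ is constant. In any case the output length is at most $q(|\fB|)$.

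\textbf{Step 2 (verify polymorphisms).} For each $f\in\tau$ and each relation $R_i^{\fB}$, check that $f$ preserves $R_i$. To do this, enumerate all $k_f$-tuples of tuples from $R_i^{\fB}$, apply $f$ componentwise, and test membership in $R_i^{\fB}$. This costs $O(|R_i^{\fB}|^{k_f}\cdot \tau_i\cdot |R_i^{\fB}|)$ time, which is polynomial since $k_f\le k$ is a constant.

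\textbf{Step 3 (verify identities).} For each identity $s\approx t$ in $\Sigma_2$, check $s(b_1,\dots,b_n)=t(b_1,\dots,b_n)$ for all $b\in B^n$ by evaluating the terms through the tables. This takes $|B|^n$ times a constant. Accept if and only if Steps 2 and 3 succeed.

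\textbf{Correctness.}
\begin{itemize}
\item If $\Pol(\fB)\models\Sigma_1$, then by definition $\mathcal{A}$ outputs a witness for $\Pol(\fB)\models\Sigma_2$, so all checks pass and we accept.
\item If $\Pol(\fB)\not\models\Sigma_2$, then no interpretation of $\tau$ by polymorphisms satisfies $\Sigma_2$. Whatever $\mathcal{A}$ outputs, one of the checks fails, and we reject.
\end{itemize}

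The only point needing care is the polynomial bound on verification. It relies on finiteness of $\Sigma_2$, as assumed: this makes the arities and variable counts constant. I do not expect a genuine obstacle beyond that.
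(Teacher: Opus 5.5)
Your proposal is correct and follows the paper's argument. Both run the algorithm for $\PCreaMeta(\Sigma_1,\Sigma_2)$ and then verify in polynomial time that the returned operations are polymorphisms of $\fB$ satisfying $\Sigma_2$, which is feasible because $\Sigma_2$ is fixed and finite. You spell out the details the paper leaves implicit: the clock, the explicit polymorphism check, and the constant bounds on arities and variable counts.
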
 
    
    This follows simply by applying the algorithm for $\PCreaMeta(\Sigma_1,\Sigma_2)$ and then testing whether the returned family of operations satisfies $\Sigma_2$ (this is indeed in \P as $\Sigma_2$ is finite and not part of the input here).
	An inspection of Chen and Larose's proof of Theorem~\ref{thm:h1} 
	shows that it proves the following stronger result.
	
	\begin{proposition}\label{prop:negative}
		Let $\Sigma_1$ and $\Sigma_2$ be finite sets of height one identities such that $\Sigma_1 \models \Sigma_2$, $\Sigma_1$ is consistent, and $\Sigma_2$ is non-trivial.
		Then $\PMeta(\Sigma_1,\Sigma_2)$ is $\NP$-hard. 
	\end{proposition}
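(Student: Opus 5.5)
The plan is to re-run the Chen–Larose reduction behind Theorem~\ref{thm:h1} and check that its two directions use different hypotheses. The completeness direction (yes-instances map to structures $\fB$ with $\Pol(\fB)\models\Sigma_1$) should only need that $\Sigma_1$ is consistent. The soundness direction (no-instances map to structures $\fB$ with $\Pol(\fB)\not\models\Sigma_2$) should only need that $\Sigma_2$ is non-trivial. Since $\Sigma_1\models\Sigma_2$, these two properties are exactly what the promise problem requires. Membership in \NP plays no role here, as only hardness is claimed.

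\textbf{The reduction.} I would fix a finite structure $\fC$ whose polymorphism clone is, up to minors, the clone of projections, for instance $1$-in-$3$-SAT on $\{0,1\}$. Then $\CSP(\fC)$ is \NP-hard. More importantly, a height one condition $\Sigma$ holds in $\Pol(\fC)$ if and only if $\Proj\models\Sigma$. Hence $\Pol(\fC)\not\models\Sigma_2$, because $\Sigma_2$ is non-trivial. Given an instance $\fA$ of $\CSP(\fC)$, I would build in polynomial time a structure $\fB$ following Chen and Larose. It is obtained from $\fA$ and $\fC$, together with a gadget on a domain large enough to carry idempotent witnesses of $\Sigma_1$. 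The gadget is arranged so that:
\begin{enumerate}[(i)]
\item if $h\colon\fA\to\fC$ is a homomorphism, then $\fB$ is homomorphically equivalent to a structure $\fE$ whose polymorphisms include all idempotent operations on some finite set. By consistency of $\Sigma_1$ (Lemma~\ref{lem:consistent}, using that $\Sigma_1$ is finite and height one), there are idempotent operations on that set satisfying $\Sigma_1$. Since height one conditions are preserved under homomorphic equivalence (compose a witness with the two homomorphisms), we get $\Pol(\fB)\models\Sigma_1$;
\item if $\fA\not\to\fC$, then from any family of polymorphisms of $\fB$ witnessing $\Sigma_2$ one can extract, by composing with the constructed maps (a minor-preserving map $\Pol(\fB)\to\Pol(\fC)$), a witness of $\Sigma_2$ in $\Pol(\fC)$. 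This contradicts the non-triviality of $\Sigma_2$, so $\Pol(\fB)\not\models\Sigma_2$.
\end{enumerate}

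\textbf{Checking the directions.} For (i), I would read off the relevant part of Chen–Larose's argument. I expect it to invoke only the existence of idempotent $\Sigma$-witnesses on arbitrary finite domains, i.e.\ consistency, applied here to $\Sigma_1$. For (ii), I would check that their argument only uses that $\Pol(\fC)$ admits a minor homomorphism to $\Proj$ and that the condition in question fails in $\Proj$, i.e.\ non-triviality, applied here to $\Sigma_2$. Given both, every instance with $\fA\to\fC$ maps to a yes-instance of $\PMeta(\Sigma_1,\Sigma_2)$ and every instance with $\fA\not\to\fC$ maps to a no-instance, which gives \NP-hardness.

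\textbf{Main obstacle.} The hard part is the gadget itself. It must make $\fB$ collapse onto a structure with a very rich clone when $\fA\to\fC$, while still forcing every polymorphism of $\fB$ to induce a polymorphism of $\fC$ when $\fA\not\to\fC$. Naive choices such as disjoint unions or products of $\fA$ and $\fC$ give the wrong polarity. I would first try to take the construction verbatim from Chen and Larose's proof and verify that no step uses that the same $\Sigma$ appears in both directions. If some step does use this, for instance a step comparing $\Sigma$ with itself under homomorphic equivalence, I would replace it by the implication $\Sigma_1\models\Sigma_2$, which is exactly what the promise setting supplies.
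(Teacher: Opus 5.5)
Your plan matches the paper's argument. The paper likewise just inspects Chen and Larose's reduction, observing that completeness uses only consistency of $\Sigma_1$ and soundness uses only non-triviality of $\Sigma_2$. Their reduction is from 3-colourability: when $G$ is 3-colourable, the constructed $\fB$ is not a core, and the polymorphisms of its core satisfy every consistent idempotent set of linear identities (Lemma 6.6(2)); otherwise $\Pol(\fB)$ satisfies no non-trivial height one condition (Fact 4). These are exactly your properties (i) and (ii), with (i) transferred to $\fB$ via homomorphic equivalence with its core, so the gadget you leave open is their construction rather than one built from 1-in-3-SAT.
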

    Note that we only state $\NP$-hardness in \cref{prop:negative}, because formally \NP only contains decision problems, not promise problems ($\Meta(\Sigma_1)$, in contrast, is of course in \NP).
    
	\begin{proof}[Proof sketch]
        The result is shown by a reduction from graph 3-colorability. For a given finite graph $G$, Chen and Larose~\cite[Lemma 6.6]{MetaChenLarose} construct a relational structure $\fB$ which has the property that $G$ is 3-colorable if and only if $\fB$ is not a core. Moreover, if $G$ is 3-colorable, then the construction is such that the polymorphisms of the core of $\fB$ satisfy all consistent idempotent sets of linear identities (see \cite[Lemma 6.6 (2)]{MetaChenLarose}).
        Thus, the polymorphism clone of the core of $\fB$ and therefore also $\Pol(\fB)$ itself satisfy all consistent sets of height one identities. 
		If $G$ is not 3-colorable, then $\Pol(\fB)$ has the property that
		it does not satisfy any non-trivial height one identity (a consequence of~\cite[Fact 4]{MetaChenLarose}).
		This shows the statement. 
	\end{proof}
	
	\section{Connection with the Uniform CSP}
	\label{sect:uniform} 
	The question whether certain tractability conditions for CSPs \emph{uniformize} has been asked and studied by Kolaitis and Vardi~\cite{KolaitisV00}. 
	Let $\Sigma$ be a set of identities. 
    The \emph{uniform CSP} for $\Sigma$ is defined as follows.
	
	\comproblem{A pair of finite structures $(\fA,\fB)$ of the same signature such that $\Pol(\fB) \models \Sigma$.}{
		`Yes' if there is a homomorphism from $\fA$ to $\fB$, `No' otherwise.}

	If $\fB$ does not satisfy $\Sigma$, then an algorithm for this problem might answer arbitrarily. 
    If $\Sigma$ is the set of identities that states the existence of a Maltsev (or Siggers, etc.) polymorphism, then the computational problem introduced above is called the \emph{uniform CSP for Maltsev constraints} (or for \emph{Siggers constraints}, etc.).
	It is an open problem whether there is a polynomial-time algorithm that solves the uniform CSP for Siggers constraints; this is stated as Problem~1 in Zhuk~\cite{Zhuk20} (there, the problem is stated using weak near unanimity polymorphisms, which, however, is equivalent as explained in Section~\ref{sect:prelims}).
    
   The following polynomial-time tractability result generalizes the second part of Theorem 4.7 in~\cite{MetaChenLarose}, which is the special case where $\Sigma_1 = \Sigma_2$.
    Be aware that $\Sigma_1$ is neither required to be finite nor linear.
    
	\begin{proposition}\label{prop:inP}
       Let $\Sigma_1$ be an idempotent set of identities such that the uniform $\CSP$ for $\Sigma_1$ is in \P, 
       and let $\Sigma_2$ be a finite set of linear identities such that $\Sigma_1 \models \Sigma_2$.
		Then $\PCreaMeta(\Sigma_1,\Sigma_2)$ and $\PMeta(\Sigma_1,\Sigma_2)$ are in \P.
	\end{proposition}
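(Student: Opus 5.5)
We prove the statement for $\PCreaMeta(\Sigma_1,\Sigma_2)$; the claim for $\PMeta(\Sigma_1,\Sigma_2)$ then follows from Observation~\ref{obs:pcMetaToPMeta}, since $\Sigma_2$ is finite. The idea is the standard self-reduction that builds polymorphisms entry by entry, as in Chen and Larose, but with a uniform CSP algorithm for $\Sigma_1$ as the oracle. The values of a polymorphism are fixed one argument tuple at a time, and each tentative choice is tested by a homomorphism query on a suitable instance.

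\textbf{Step 1: the indicator instance.} Let $\tau$ be the finite set of symbols in $\Sigma_2$, and let $n$ be the maximal arity of a symbol in $\tau$, which is a constant. Build a structure $\fA$ as follows. For every $f\in\tau$ of arity $k$, take the domain-disjoint power $\fB^k$. Then quotient the disjoint union of these powers by the equivalence relation generated by $\Sigma_2$: for each identity $s\approx t$ in $\Sigma_2$ and each assignment of the variables to elements of $B$, identify the element of the copy for $s$ with the element of the copy for $t$. Since the identities are linear, a side that is a bare variable $x$ is handled by identifying the tuple element with a fresh copy of the element $b=x$, which we add as a copy of $\fB$. The size of $\fA$ is polynomial in $|\fB|$ because $n$ and $|\Sigma_2|$ are constants. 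Homomorphisms $\fA\to\fB$ that restrict to the identity on the copy of $B$ correspond exactly to interpretations $\tau\to\Pol(\fB)$ satisfying $\Sigma_2$.

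\textbf{Step 2: forcing constants and self-reduction.} To force the identity on the copy of $B$, and later to fix individual values, add singleton unary relations $\{b\}$ for each $b\in B$ to $\fB$, giving $\fB'$. Because $\Sigma_1$ is idempotent, $\Pol(\fB)\models\Sigma_1$ implies $\Pol(\fB')\models\Sigma_1$: the idempotent witnesses preserve singletons. We also give $\fA$ the matching unary relations. The promise of the uniform CSP for $\Sigma_1$ is therefore maintained throughout.

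The algorithm then goes through the elements of $\fA$ one by one. For the current element it tries each $b\in B$, adds the unary constraint $\{b\}$ to that element, and keeps the first $b$ for which the uniform CSP algorithm answers yes. Each pass makes at most $|A|\cdot|B|$ polynomial-time calls. At the end every element has a fixed value, and reading these values off the copies of $\fB^k$ gives the output map $f$.

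\textbf{Main obstacle: correctness when the oracle can err.} The CSP algorithm is only guaranteed correct on instances whose template satisfies the promise, and every template used here is $\fB'$. So if $\Pol(\fB)\models\Sigma_1$, every answer is correct. We also need that the initial instance is satisfiable, which holds because $\Sigma_1\models\Sigma_2$ gives an interpretation satisfying $\Sigma_2$. The invariant "the current instance is satisfiable" is then preserved at every step, so the final assignment is a genuine homomorphism, i.e.\ a valid witness.

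If the promise fails, the output may be arbitrary. The algorithm must nevertheless stop in polynomial time, so the uniform algorithm has to be clocked with its polynomial bound. If no $b$ is accepted at some step, the algorithm outputs anything.
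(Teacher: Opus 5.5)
Your proof is correct and follows essentially the same route as the paper. Both arguments use an indicator structure built from copies of $\fB^k$ glued according to $\Sigma_2$, the expansion of $\fB$ by singleton relations (justified by the idempotence of $\Sigma_1$), a SAT-style self-reduction driven by the uniform algorithm, and Observation~\ref{obs:pcMetaToPMeta} for $\PMeta(\Sigma_1,\Sigma_2)$. The differences are only technical: you glue by quotienting rather than adding an equality relation $R_=$ to the template, and you pin bare-variable sides with a singleton-constrained copy of $B$ rather than adding the idempotency identities $\Delta$ and reducing to height one identities. Your remark about clocking the uniform algorithm makes explicit a point the paper leaves implicit.
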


	\begin{proof}
        Let $\fB$ be a finite structure and let $\fC$ be the expansion of $\fB$ whose signature contains a symbol $R_=$ for the equality relation and a symbol $R_b$ for every element $b \in B$ which denotes the unary relation $\{b\}$.
        Note that if $\fB$ has polymorphisms satisfying $\Sigma_1$, then, as they are idempotent, these polymorphisms are also polymorphisms of $\fC$. 
        Therefore, $\fC \models \Sigma_1$. 
        Since $\Sigma_1 \models \Sigma_2$ and each polymorphism of $\fC$ must be idempotent, it follows that $\Sigma_1 \models \Sigma_2 \cup \Delta$ and $\fC \models \Sigma_2 \cup \Delta$, where $\Delta$ is a set of equations that express that all functions mentioned in $\Sigma_2$ are idempotent. 
        Conversely, if $\fB \not\models \Sigma_2$, then certainly $\fB \not \models \Sigma_2 \cup \Delta$ and $\fC \not\models \Sigma_2 \cup \Delta$.
        Therefore, from now on we can safely replace $\Sigma_2$ by $\Sigma_2 \cup \Delta$ and, hence, assume that $\Sigma_2$ is idempotent.

		We may suppose without loss of generality 
		that $\Sigma_2$ is the union of a set of height one identities and the identities of the form $f(x,\dots,x) \approx x$.

        The following construction is essentially the \emph{indicator structure} from~\cite{JeavonsClosure}.
		For each function symbol of arity $k$ that appears in $\Sigma_2$, we create a copy of $\fB^k$ (i.e., the structure with domain $B^k$ and the relations as defined in \cref{sect:prelims}).
		Let $\fI$ be the structure obtained from the disjoint union over all these copies as follows.
        For every $b \in B$, we add to $R_b^{\fI}$ all tuples
        of the form $(b,\dots,b)$ in the copies of $\fB$.
		We additionally add for every identity $f(x_{i_1},\dots,x_{i_k}) \approx g(x_{j_1},\dots,x_{j_{\ell}})$
        in $\Sigma_2$ 
        and every function 
        $\alpha \colon 
        \{x_{i_1},\dots,x_{i_k},x_{j_1},\dots,x_{j_{\ell}} \} \to B$
        the constraint
        $c = d$ where
        $c = (\alpha(x_{i_1}),\dots,\alpha(x_{i_k}))$
        belongs to the copy of $\fB^k$ created for $f$, and 
        $d = (\alpha(x_{j_1}),\dots,\alpha(x_{j_\ell}))$ belongs to the copy of $\fB^{\ell}$ created for $g$  
        (adding the constraint $c=d$ means we include $(c,d)$ to $R_=^\fI$).

		Observe that if there is a homomorphism $h$ from $\fI$ to $\fC$, then $\Pol(\fB) \models \Sigma_2$:
        each restriction of $h$ to a copy of $\fB^k$ is a polymorphism of $\fB$ of arity $k$, 
        and the additional equality constraints ensure that the respective operations satisfy the given identities $\Sigma_2$.
        Let $A$ denote the uniform polynomial-time algorithm for $\Sigma_1$. 
		Run $A$ to determine whether there is a homomorphism $\fI \to \fC$.
		If $A$ rejects, then return `No'. To see that this is the correct answer, suppose for contradiction that $\Pol(\fB) \models \Sigma_1$.
        Then the answer of the algorithm $A$ must be correct.
        Hence, there is no homomorphism from $\fI$ to $\fC$ and, therefore, $\Pol(\fB) \not \models \Sigma_2$, a contradiction to the assumption that $\Sigma_1$ implies $\Sigma_2$. 

        If $A$ accepts, we proceed like for the well-known self-reduction of SAT.
		If $A$ accepts, pick an element $x$ from $\fI$. For every $b \in B$, do the following:
		\begin{itemize}
			\item Add the constraint $R_b(x)$, i.e., add $x$ to $R_b^{\fI}$.
			\item Run $A$ on the resulting instance. 
			\item If $A$ accepts, then proceed in the same way with another element of $\fI$. 
		\end{itemize}
		If for some variable, $A$ rejects for all elements $b \in B$, then we return `No'. 
		Note that this answer must be correct: if $\Pol(\fB) \models \Sigma_1$, then it is impossible that we run into this case by the correctness of $A$ if the promise is fulfilled. If $\Pol(\fB) \not \models \Sigma_1$,
		then the answer `No' is correct. 
		
		Otherwise, eventually all elements $x$ of $\fI$ are contained in $R^{\fI}_b$ 
        for some $b \in B$.
      Thus, given the promise $\Pol(\fB) \models \Sigma_1$, the map $x \mapsto b$ is a solution to $\PCreaMeta(\Sigma_1,\Sigma_2)$.
        
       To solve $\PMeta(\Sigma_1,\Sigma_2)$, it remains to verify in polynomial time whether the map that sends $x$ to $b$ is a homomorphism from $\fI$ to $\fB$ (see \cref{obs:pcMetaToPMeta}).
       If no, return `No', otherwise return `Yes'. These answers are clearly correct. 
	\end{proof}

    A \emph{semiuniform algorithm for the $\CSP$ for $\Sigma$} (where $\Sigma$ is some set of identities) is an algorithm that takes as input a pair of finite structures $(\fA,\fB)$ of the same signature $\tau$ such that $\Pol(\fB) \models \Sigma$ together with a map $f \colon \tau \to \Pol(\fB)$ witnessing that $\Pol(\fB) \models \Sigma$ and outputs whether there is a homomorphism from $\fA$ to $\fB$.

    The following can be seen as a form of converse of 
    Proposition~\ref{prop:inP}: there, the existence of a uniform 
    algorithm for the CSP for $\Sigma_1$ implies that $\PCreaMeta(\Sigma_1,\Sigma_2)$ is in \P; here, the assumption that $\PCreaMeta(\Sigma_1,\Sigma_2)$ is in \P, together with a polynomial-time semiuniform algorithm for $\Sigma_2$, implies that the uniform algorithm for $\Sigma_1$ is in \P.

    \begin{proposition}\label{prop:pcMetaimpliesUniform}
    Let $\Sigma_1$ and $\Sigma_2$ be sets of $\sigma$-identities such that $\Sigma_1 \models \Sigma_2$ and the set $\tau \subseteq \sigma$ of symbols that appear in $\Sigma_2$ is finite. If $\PCreaMeta(\Sigma_1,\Sigma_2)$ is in \P and there is a polynomial-time semiuniform algorithm for the $\CSP$ for $\Sigma_2$, then the uniform $\CSP$ for $\Sigma_1$ is in \P.
    \end{proposition}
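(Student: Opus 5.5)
The plan is to compose the two assumed algorithms directly. Given an instance $(\fA,\fB)$ of the uniform $\CSP$ for $\Sigma_1$, so that $\Pol(\fB) \models \Sigma_1$ is promised, I first run the polynomial-time algorithm for $\PCreaMeta(\Sigma_1,\Sigma_2)$ on $\fB$. Because the promise $\Pol(\fB)\models\Sigma_1$ holds, the output is a map $f \colon \tau \to \Pol(\fB)$ witnessing $\Pol(\fB) \models \Sigma_2$. Then I feed the triple $(\fA,\fB,f)$ to the semiuniform algorithm for the $\CSP$ for $\Sigma_2$ and return its answer. Since $\Sigma_1 \models \Sigma_2$, we have $\Pol(\fB) \models \Sigma_2$, and $f$ is a genuine witness. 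Hence this input satisfies the promise of the semiuniform algorithm, and its answer correctly says whether $\fA$ maps homomorphically to $\fB$.

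The key point to check is the running time. The output $f$ consists of finitely many operations, one for each symbol in the finite set $\tau$. Each operation is given by its table of size $|B|^k$, where $k$ is its arity. Since $\tau$ is fixed, these arities are bounded by a constant. Moreover, $f$ was produced by a polynomial-time algorithm, so its size is polynomial in the representation size of $\fB$. The semiuniform algorithm runs in time polynomial in the size of $(\fA,\fB,f)$, and composing the two bounds gives a time polynomial in the size of $(\fA,\fB)$.

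The one subtlety I expect is robustness: the semiuniform algorithm is only guaranteed to behave well on correct witnesses. This causes no problem here, because on every instance of the uniform $\CSP$ for $\Sigma_1$ the promise holds, so the witness is always correct. To be safe, I would still clock the semiuniform algorithm with its polynomial bound (or check $f$ against the finite $\Sigma_2$ in polynomial time, as in \cref{obs:pcMetaToPMeta}). This keeps the procedure polynomial on all inputs, including those where the promise fails and any answer is allowed. No earlier results are needed beyond the definitions.
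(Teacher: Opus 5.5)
Your proposal is correct and matches the paper's proof: run the $\PCreaMeta(\Sigma_1,\Sigma_2)$ algorithm on $\fB$ to obtain a witness $f$, then pass $(\fA,\fB,f)$ to the semiuniform algorithm; your running-time bookkeeping only spells out what the paper leaves implicit. One small caveat on your optional robustness remark: only $\tau$, not $\Sigma_2$, is assumed finite here, so checking $f$ against $\Sigma_2$ is not available in general; clocking suffices, and neither safeguard is needed for the promise problem.
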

    \begin{proof}
        Given a pair $(\fA,\fB)$ of structures of the same signature such that $\Pol(\fB) \models \Sigma_1$. 
        By assumption, we can compute $f \colon \tau \to \Pol(\fB)$ witnessing that $\Pol(\fB) \models \Sigma_2$. We can then use the semiuniform algorithm for the $\CSP$ for $\Sigma_2$ to decide in polynomial time whether there is a homomorphism from $\fA$ to $\fB$.
    \end{proof}

	\section{Coset-Generating Polymorphisms}
	\label{sect:coset-gen}
    A \emph{coset-generating operation} on a set $A$ is an operation $m \colon A^3 \to A$ that can be written as $m(x,y,z) = x\circ y^{-1} \circ z$ for some operations $\circ$ and ${\cdot}^{-1}$ such that $(A,\circ, {\cdot}^{-1}, e)$ forms a group for some element $e \in A$. 
    In the following, we use juxtaposition instead of $\circ$.
	Let $G$ be a group. We write $H \leq G$ if $H$ is a subgroup of $G$. The following is well-known and the motivation why these operations are called \emph{coset-generating}; for the convenience of the reader we give a proof.
    
	\begin{lemma}
		\label{lem:folklore}
		Let $k \in {\mathbb N}$. Then a relation $R \sse G^k$ is preserved by the operation $m \colon G^3 \to G$ defined by $m(x,y,z) = x y^{-1} z$ if and only if $R = g H$ for some $g\in G^k$ and $H\leq G^k$ (\ie $R$ is a coset of a subgroup of $G^k$).
	\end{lemma}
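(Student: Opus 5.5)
We prove the two directions separately; throughout, $m$ acts on $G^k$ componentwise, so $m(\bar x,\bar y,\bar z)=\bar x\bar y^{-1}\bar z$ with the group operations of the direct power $G^k$. We assume $R$ is nonempty, as is implicit in the statement: the empty relation is trivially preserved but is not a coset, so it must be excluded or treated as a convention.

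\emph{Cosets are preserved.} Let $R=gH$ with $H\leq G^k$. Take $gh_1,gh_2,gh_3\in R$. Then
\[(gh_1)(gh_2)^{-1}(gh_3)=gh_1h_2^{-1}g^{-1}gh_3=gh_1h_2^{-1}h_3\in gH,\]
because $H$ is closed under products and inverses. Hence $m$ preserves $R$.

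\emph{Preserved relations are cosets.} Suppose $R\neq\emptyset$ is preserved by $m$. Fix $g\in R$ and set $H:=g^{-1}R=\{g^{-1}r\mid r\in R\}$, so that $R=gH$; it remains to show $H\leq G^k$. We check the subgroup axioms using closure of $R$ under $m$, in this order.
\begin{enumerate}[(1)]
\item \emph{Identity.} $g\in R$, so $e=g^{-1}g\in H$.
\item \emph{Inverses.} For $h=g^{-1}r\in H$, set $r':=m(g,r,g)=gr^{-1}g\in R$. Then $g^{-1}r'=r^{-1}g=(g^{-1}r)^{-1}=h^{-1}$, so $h^{-1}\in H$.
\item \emph{Products.} For $h_1=g^{-1}r_1$ and $h_2=g^{-1}r_2$, set $r:=m(r_1,g,r_2)=r_1g^{-1}r_2\in R$. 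Then $g^{-1}r=g^{-1}r_1g^{-1}r_2=h_1h_2$, so $h_1h_2\in H$.
\end{enumerate}
Hence $H$ is a subgroup and $R=gH$ is a coset.

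No step is a real obstacle; the argument is routine. The only care needed is to verify the subgroup axioms with the correct placement of $g$ and the inverses in (2) and (3), and to handle the empty relation as noted above.
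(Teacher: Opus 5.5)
Your proof is correct and is essentially the paper's argument: the forward computation is identical, and for the converse you likewise fix an element of $R$ and show that $g^{-1}R$ is closed under products and inverses, via the same applications $m(r_1,g,r_2)$ and $m(g,r,g)$. Your remark about the empty relation is also right; the paper's proof tacitly assumes $R\neq\emptyset$ when it picks an element $y\in R$.
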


	\begin{proof}
		Let $H$ be a subgroup of $G^k$. 
		Let $g \in G^k$ and $h_1,h_2,h_3 \in H$. 
		As usual, we may apply $m$ to elements of $G^k$ componentwise; 
		then \[m(gh_1,gh_2,g h_3) = gh_1 (g h_2)^{-1} gh_3 = gh_1 h_2^{-1} h_3 \in gH\]
		so $m$ indeed preserves all cosets of $H$.
		
		Conversely, suppose that $R \subseteq G^k$ is preserved by $m$. 
		Choose $y \in R$ arbitrarily. 
		We claim that $y^{-1} R$ is a subgroup $H$ of $G^k$. This will show that $R = yH$ is a coset of a subgroup of $G^k$. 
		Arbitrarily choose $a,b \in y^{-1} R$. 
		Then $x := ya \in R$ and $z := yb \in R$. 
		Hence, $m(x,y,z) = ya y^{-1} yb = y ab \in R$, so $ab \in y^{-1} R$ and 
		$y^{-1} R$ is closed under the group operation. 
		Moreover, 
		$m(y,ya,y) = y (ya)^{-1} y = y a^{-1} \in R$, 
		so $a^{-1} \in y^{-1} R$ and $y^{-1} R$ is also closed under taking inverses.
	\end{proof}

	A structure $\fB$ with a polymorphism which is a coset-generating operation 
    is called a \emph{coset structure} (with respect to $G$).
    CSPs for coset structures will be called \emph{coset CSPs}.
	As $xx^{-1}y= y = y x^{-1} x$, every coset-generating operation is a Maltsev operation. 
    Therefore, every coset structure has a Maltsev polymorphism, but the converse is not true (see, e.g., the Maltsev polymorphisms that appear in~\cite{BodirskyMoorheadConservativeMaltsev}),
    or the `No'-instances resulting from the reduction in \cref{thm:groupCSPNPcomplete}. 
    However, we have the following well-known fact. 

	\begin{lemma}\label{lem:heap}
		A relational structure $\fB$ is a coset structure if and only if it has a Maltsev polymorphism $m$ that additionally satisfies for all $x,y,u,v,w \in B$ the following associativity identity
		\begin{align}
			m(u,x,m(v,y,w)) & = m(m(u,x,v),y,w). 
			\label{eq:heap}
		\end{align}
	\end{lemma}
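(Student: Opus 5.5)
The forward direction is a direct computation. Suppose $\fB$ is a coset structure, witnessed by $m(x,y,z) = xy^{-1}z$ for some group structure on $B$. We already observed that $m$ is Maltsev. Associativity of the group gives
\[m(u,x,m(v,y,w)) = u x^{-1} v y^{-1} w = m(m(u,x,v),y,w),\]
so identity~\eqref{eq:heap} holds.

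For the converse, let $m$ be a Maltsev polymorphism of $\fB$ satisfying~\eqref{eq:heap}. I would build a group structure on $B$ that recovers $m$. First fix an arbitrary element $e \in B$. Then define
\[x \circ z := m(x,e,z) \qquad\text{and}\qquad x^{-1} := m(e,x,e).\]
The group axioms would be checked one at a time using only the Maltsev identities and~\eqref{eq:heap}.
\begin{itemize}
\item \emph{Identity.} We have $e\circ z = m(e,e,z) = z$ and $x \circ e = m(x,e,e) = x$.
\item \emph{Associativity.} $(x\circ y)\circ z = m(m(x,e,y),e,z) = m(x,e,m(y,e,z)) = x\circ(y\circ z)$, by~\eqref{eq:heap}.
\item \emph{Inverses.} $x \circ x^{-1} = m(x,e,m(e,x,e)) = m(m(x,e,e),x,e) = m(x,x,e) = e$. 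Symmetrically, $x^{-1}\circ x = m(m(e,x,e),e,x) = m(e,x,m(e,e,x)) = m(e,x,x) = e$.
\end{itemize}

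It remains to show $m(x,y,z) = x\circ y^{-1}\circ z$, which I expect to be the only step requiring a small trick. Expanding, and writing $m(e,y,e)$ for $y^{-1}$,
\[x\circ y^{-1}\circ z = m(x,e,m(m(e,y,e),e,z)).\]
I would first simplify the inner term: $m(m(e,y,e),e,z) = m(e,y,m(e,e,z)) = m(e,y,z)$ by~\eqref{eq:heap} and the Maltsev identity. The whole expression is then $m(x,e,m(e,y,z))$. Applying~\eqref{eq:heap} once more, this equals $m(m(x,e,e),y,z) = m(x,y,z)$.

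Hence $m$ is the coset-generating operation of the group $(B,\circ,{\cdot}^{-1},e)$. Since $m$ is a polymorphism of $\fB$, this shows $\fB$ is a coset structure.
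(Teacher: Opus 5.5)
Your proof is correct and takes the same approach as the paper. You fix $e$, set $x\circ z := m(x,e,z)$ with inverse $m(e,x,e)$, and derive the group axioms and $m(x,y,z)=x\circ y^{-1}\circ z$ from the Maltsev identities and \eqref{eq:heap}; you simply write out, with every instance of \eqref{eq:heap} correctly instantiated, the verifications that the paper calls straightforward.
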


	\begin{proof}
		If $\fB$ is a coset structure and $f \colon (x,y,z) \mapsto x y^{-1} z$ is the coset-generating polymorphism of $\fB$, then $f$ is a Maltsev operation that clearly satisfies the associativity condition~\eqref{eq:heap}. 
		Conversely, if $m$ is a Maltsev operation that satisfies~\eqref{eq:heap}, then we can pick an arbitrary element $e \in B$, and define 
		$g \colon (x,y) \mapsto m(x,e,y)$. 
		It is straightforward to verify that $(B;g)$ is a group with identity element $e$ (the inverse of an element $x$ is given by $m(e,x,e)$), and that $m$ is the coset-generating polymorphism with respect to this group.
	\end{proof}

	Ternary Maltsev operations $m$ that satisfy~\eqref{eq:heap} are also called \emph{heaps}~\cite{heap} (the concept goes back to Pr\"ufer~\cite{Pruefer1924} and Baer~\cite{Baer1929} in the 1920s). 
	Thus, coset-generating polymorphisms (terminology used in~\cite{MetaChenLarose}) and heaps are the same concept; in the following, we use both terms interchangeably, but prefer `coset-generating' in the context of polymorphisms, and `heaps' in the abstract setting of clones.
    Note that sometimes also the identity $m(u, x, m(v, y, w)) = m(u, m(y, v, x), w)$ is required for heaps. However, this is implied by the other identities, which follows from our \cref{lem:heap} and can also be found in~\cite[Lemma 2.3]{Trusses}. 

    We write $\SigHeap$ for the identities that state the existence of a heap operation (i.e.,\ \eqref{eq:malt} together with \eqref{eq:heap}).
    Using \cref{prop:inP} and \cref{prop:pcMetaimpliesUniform}, we obtain the following corollary.

    \begin{corollary}\label{cor:cosetgenuniformPmeta}
        $\PCreaMeta(\SigHeap,\SigMaltsev)$ is in \P if and only if there is a uniform polynomial-time algorithm for coset CSPs.
	\end{corollary}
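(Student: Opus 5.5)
The corollary follows by combining \cref{prop:inP} and \cref{prop:pcMetaimpliesUniform} with $\Sigma_1 = \SigHeap$ and $\Sigma_2 = \SigMaltsev$. Note that a structure $\fB$ satisfies $\Pol(\fB) \models \SigHeap$ exactly when it is a coset structure (\cref{lem:heap}). Hence the uniform CSP for $\SigHeap$ is precisely the uniform problem for coset CSPs, and it suffices to prove the two directions for this pair of conditions.

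First I check the standing hypotheses. $\SigHeap$ contains $\SigMaltsev$, so $\SigHeap \models \SigMaltsev$. $\SigMaltsev$ is a finite set of linear identities whose only function symbol is $m$, so the symbol set $\tau$ is finite. $\SigHeap$ is idempotent in the sense of the definition: $m(x,x,x) \approx x$ is the instance $y=x$ of the Maltsev identity $m(x,x,y)\approx y$. Formally the definition requires the identity to be contained in $\Sigma$, so I would silently add it to $\SigHeap$, which changes no models. The same remark applies if one insists on $\Sigma_2$ being idempotent, although the proof of \cref{prop:inP} already handles that.

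For the backward direction, assume there is a uniform polynomial-time algorithm for coset CSPs, that is, the uniform CSP for $\SigHeap$ is in \P. Then \cref{prop:inP} applies directly and yields $\PCreaMeta(\SigHeap,\SigMaltsev) \in \P$.

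For the forward direction, assume $\PCreaMeta(\SigHeap,\SigMaltsev)$ is in \P. \cref{prop:pcMetaimpliesUniform} additionally requires a polynomial-time semiuniform algorithm for the CSP for $\SigMaltsev$. This is Bulatov and Dalmau's algorithm~\cite{Maltsev}, which is semiuniform as discussed in the introduction: given $\fA$, $\fB$ and a Maltsev polymorphism of $\fB$, it decides $\CSP$ in polynomial time. Applying \cref{prop:pcMetaimpliesUniform} then gives that the uniform CSP for $\SigHeap$ is in \P.

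The only point requiring care is that the Bulatov–Dalmau algorithm runs in time polynomial in the full input. Here the Maltsev operation is given as a table of size $|B|^3$, which is polynomial in the representation size of $\fB$. So I expect no real obstacle beyond this bookkeeping remark.
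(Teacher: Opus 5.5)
Your proposal is correct and follows the paper's route exactly. The paper obtains the corollary by instantiating \cref{prop:inP} and \cref{prop:pcMetaimpliesUniform} with $\Sigma_1=\SigHeap$ and $\Sigma_2=\SigMaltsev$, using \cref{lem:heap} to identify the uniform CSP for $\SigHeap$ with coset CSPs and Bulatov and Dalmau's semiuniform algorithm to supply the extra hypothesis of the second proposition; your checks (idempotence via $m(x,x,x)\approx x$, and the polynomial size of the Maltsev table) are details the paper leaves implicit.
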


    In the light of \cref{prop:negative} and \cite{MetaChenLarose}, it might be interesting to know whether $\SigHeap$ can be enforced by some consistent set of linear identities.  
    As we see next, this is not the case.

	\begin{proposition}\label{obs:notMaltsev}
		Let $\Sigma$ be a set of linear identities with $\Sigma \not\models x\approx y$.
		Then there is a structure $\fB$ without a coset-generating polymorphism such that $\Pol(\fB) \models \Sigma$.
		In particular, being a coset structure cannot be defined by linear identities.
	\end{proposition}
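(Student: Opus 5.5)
The plan is to use \cref{lem:extend_domain} on a structure whose polymorphism clone is as large as possible while still ruling out heaps. The key observation is that every coset-generating operation on a set $D$ forces $D$ to carry a group structure. This happens for every $|D|$, so cardinality alone cannot obstruct a heap. However, a coset-generating polymorphism must preserve every relation of $\fB$, and by \cref{lem:folklore} every relation would then have to be a coset of a subgroup of $D^k$. In particular, every unary relation would have to be a coset of a subgroup of the group on $D$, so its size divides $|D|$. So I aim to choose $\fB$ with a unary relation of size not dividing $|B|$.

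Concretely, start from the one-element clone on $C=\{c\}$, which trivially satisfies $\Sigma$. Since $\Sigma$ is linear and $\Sigma\not\models x\approx y$, \cref{lem:extend_domain} applies. Applying it twice, with domains $C\cup\{a\}$ and then $C\cup\{a,b\}$, gives a clone $\sD$ on a three-element set $D=\{c,a,b\}$ with $\sD\models\Sigma$. Here $\sD$ consists of all operations whose restriction to a chain of subsets is suitably compatible; in particular, after the two steps every operation in $\sD$ preserves the subset $\{c,a\}$, and in the first step also $\{c\}$. Rather than identifying $\sD$ exactly, I would use a weaker structure. I take $\fB$ with domain $D$ and the unary relations $U_1=\{c\}$ and $U_2=\{c,a\}$. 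Every operation in $\sD$ preserves both, because restrictions to $\{c\}^k$ and $\{c,a\}^k$ land in the respective smaller clones by construction. Hence $\sD\subseteq\Pol(\fB)$, and therefore $\Pol(\fB)\models\Sigma$.

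It remains to check that $\fB$ has no coset-generating polymorphism. By \cref{lem:folklore}, such a polymorphism would make $U_2$ a coset of a subgroup of a group of order $3$. Then $|U_2|=2$ would divide $3$, which is impossible. The ``in particular'' part follows directly: if heaps were defined by some set of linear identities $\Sigma$, then $\Sigma\not\models x\approx y$, because heaps exist on every domain, for example via cyclic groups. The structure just constructed would then satisfy $\Sigma$ without being a coset structure, a contradiction.

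The main obstacle is ensuring that the extended clone really preserves the unary relation $\{c,a\}$, since the extension in \cref{lem:extend_domain} can send tuples to $a_i$ or through $\pi$. If that fails, I would first try choosing $\pi$ to map the new point $b$ into $\{c,a\}$ and check that outputs on $\{c,a\}^k$-tuples remain in $\{c,a\}$. As a fallback, I would define $\fB$ so that its polymorphisms are exactly the operations preserving $\{c,a\}$ and restricting to a clone satisfying $\Sigma$.
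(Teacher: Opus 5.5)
Your proof is correct and is essentially the paper's argument specialised to $n=2$ and $\fA=(\{c,a\};\{c\})$. The paper likewise adjoins one new point to a structure $\fA$ on $n\geq 2$ elements, adds the unary relation $A$, invokes \cref{lem:extend_domain} to get $\Pol(\fB)\models\Sigma$, and excludes a heap because $|A|=n$ does not divide $(n+1)^r$. The worry in your last paragraph is unfounded: membership in the clone $\sD$ of \cref{lem:extend_domain} already requires the restriction to $C^k$ to lie in $\sC$, so every operation in $\sD$ maps $C^k$ into $C$, exactly as you argued earlier.
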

	
	\begin{proof}
		Let $\fA$ be a structure with domain $A$ such that $\Pol(\fA) \models \Sigma$ and $n := \abs{A} \geq 2$  
		(such an $\fA$ exists because $\Sigma \not\models x\approx y$).
        Define a structure $\fB$ with domain $B = A \cup \{b\}$ and with relations $R_i^{\fB} = R_i^{\fA}$; furthermore, add the unary relation $A \sse B$.
        Now, $\Pol(\fB)$ clearly consists of all extensions of polymorphisms of $\fA$ (meaning that the restriction of each $f \in \Pol(\fB)$ of arity $k$ to $A^k$ is in $\Pol(\fA)$). 
        Therefore, by \cref{lem:extend_domain}, $\Pol(\fB) \models \Sigma$.
		
		It remains to see that $\fB$ does not contain a coset-generating operation. Suppose otherwise.
		Then the  corresponding group  would have order $n+1$.
        The size of each relation $R \sse B^r$ of $\fB$ needs to divide $(n+1)^r$: 
        the relation $R$ must be a coset of a subgroup of a group with domain $B^r$ by Lemma~\ref{lem:folklore}, and hence
        $|R|$ must divide $|B|^r$ by Lagrange's theorem.
		However, the unary relation $A$ contains $n$ elements, which is coprime to $(n+1)^r$~-- thus, $\Pol(\fB)$ cannot contain a coset-generating operation. 
	\end{proof}

\section{NP-Hardness of Detecting Heap Polymorphisms}\label{sect:coset-gen-proof}
	
	The following is our main result and answers the question of Chen and Larose~\cite{MetaChenLarose} mentioned in the introduction. 
	
	\begin{theorem}[see \cref{thm:groupCSPNPcompleteIntro}]\label{thm:groupCSPNPcomplete}
		The following problem is \NP-complete:
		\compproblem{A finite structure $\fB$}{Is $\fB$ a coset structure?}
	\end{theorem}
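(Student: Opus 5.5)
Membership in \NP is immediate: a certificate is a ternary operation $m \colon B^3 \to B$, given as a table of size $|B|^3$, which is polynomial in the input. By \cref{lem:heap} we check in polynomial time that $m$ is Maltsev and satisfies \eqref{eq:heap}, which takes $O(|B|^5)$ checks. We also check that $m$ preserves every relation of $\fB$; since the relations are listed explicitly, this takes time $O(|R|^3)$ per relation $R$. The real work is \NP-hardness.

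For hardness I would reduce from a problem that asks for a permutation or labelling of a domain with extra structure, for instance \textsc{3-Colorability} or \textsc{1-in-3-SAT}. The idea is to encode the choice of group structure on the domain. The remark after the statement says the proof only uses groups of order $4p$ for primes $p\geq 5$, so I would use a domain $B$ of size $4p$. I would add unary and binary relations that pin down most of the group. Concretely, I would use a unary relation $N$ of size $p$, forced to be a coset of a normal abelian subgroup $\mathbb{Z}_p$, and a partition of $B$ into four blocks of size $p$, forced to be the cosets via a binary relation that is the coset equivalence, i.e.\ the graph of a subgroup of $G^2$ by \cref{lem:folklore}. This leaves open only how the quotient of order $4$ acts on $\mathbb{Z}_p$ by conjugation: through a homomorphism $G/N\to \mathrm{Aut}(\mathbb{Z}_p)\cong\mathbb{Z}_{p-1}$, choosing signs, or elements of order dividing $4$. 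These binary choices serve as ``variables''. Further relations (cosets in $G^k$, such as graphs of partial automorphisms or twisted diagonals $\{(x,\varphi(x))\}$) would encode clauses. A relation is preserved if and only if it is a coset with respect to the chosen group (\cref{lem:folklore}), so a given relation is compatible with some choices of the action and not others.

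Because a single structure has only one group, I expect the construction to need many copies or blocks, one per variable of the source instance. One natural route is to let the domain be a product, or a disjoint union embedded via unary relations, with a separate $4p$-sized component for each variable. The relations linking components then enforce clause constraints, such as ``exactly one of these twists is nontrivial'' or ``these two twists differ'', via subsets of $G_i\times G_j$ that are cosets exactly when the constraint holds. A cleaner alternative uses a single group $G=\mathbb{Z}_p^n\rtimes \mathbb{Z}_4$, with each coordinate acted on by $\pm1$ or $\pm i$ (using $p\equiv1 \bmod 4$ where needed), the choice per coordinate being the variable. I would then verify the two directions. For completeness, a satisfying assignment yields an explicit group, and we check each relation is a coset. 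For soundness, any heap polymorphism yields via \cref{lem:heap} a group in which the gadget relations force the structure described, and we read off an assignment.

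The main obstacle is soundness. An arbitrary heap on the underlying set gives an abstract group of the given order, and the relations must rigidly force it to be of the intended form, including that $N$ really becomes a subgroup coset with the intended identifications. They must do so without accidentally forbidding intended solutions. Lagrange-type counting, as in \cref{obs:notMaltsev}, together with Sylow arguments for order $4p$ should help pin down the normal Sylow $p$-subgroup. I would first try to make all ``rigidifying'' relations graphs of explicit bijections, so that any compatible group is determined up to the few intended free parameters.
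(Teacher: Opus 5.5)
Your \NP-membership argument is fine. The hardness part is not a proof: no reduction is actually specified, and the mechanism you propose cannot work as stated. With $|B| = 4p$ there are at most five groups of that order. The conjugation action of $G/N$ on the normal Sylow subgroup $C_p$ is a single global datum with a constant number of possible values. It cannot encode the unboundedly many independent bits of a 3-Colorability or 1-in-3-SAT instance.

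Your two fixes do not repair this. The group $\mathbb{Z}_p^n \rtimes \mathbb{Z}_4$ has $4p^n$ elements, so $\fB$ would be exponentially large and the reduction is not polynomial. For a union of $n$ blocks of size $4p$, the heap is one group of order $4pn$. Nothing you describe forces the blocks to carry independent twists. The linking gadgets (subsets of $G_i \times G_j$ that are cosets exactly when a clause holds) are never constructed. Soundness, which you rightly flag as the main obstacle, is the whole content of the theorem, and it is left open.

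The missing idea is where the hardness comes from. It is not the choice of isomorphism type of the group. It is the choice of bijection between the unstructured domain and a fixed group. Your plan to rigidify everything with graphs of explicit bijections destroys exactly this freedom.

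The paper keeps $|B| = 4p$ and uses only unary two-element relations $\{u,v\}$, one per edge of a graph $\Gamma$ that has a vertex of degree at least four. By \cref{lem:folklore}, each such relation must be a coset of a subgroup of order two. The classification of groups of order $4p$ then restricts the group:
\begin{itemize}
\item The abelian groups and the dicyclic group have at most three involutions, so the degree-four vertex excludes them.
\item In $C_p \rtimes C_4$ with faithful action, all involutions lie in the subgroup isomorphic to $D_{2p}$. This forces $\Gamma$ to be bipartite.
\item In $D_{4p}$, the two-element cosets are exactly the edges of $K_{2p,2p}$ (rotations versus reflections) plus a perfect matching $\{a, ad^p\}$ on each side (\cref{lem:cosetsDfourp}).
\end{itemize}
Hence if $\fB$ is a coset structure, $E(\Gamma)$ is the disjoint union of a bipartite graph and a matching. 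Conversely, such a decomposition gives an injection $V \to D_{4p}$ sending the two sides into $\gen{d}$ and $s\gen{d}$, with matched pairs differing by $d^p$. Since deciding this decomposition is \NP-complete (\cref{lem:graphNPcomplete}), hardness follows.
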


    \begin{corollary}\label{cor:groupCSPNPcomplete}
        The creation-metaproblem for coset-generating polymorphisms is not in \P unless $\P = \NP$.
    \end{corollary}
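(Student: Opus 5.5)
The plan is to reduce the decision problem of \cref{thm:groupCSPNPcomplete} to the creation-metaproblem, following the argument of \cref{obs:pcMetaToPMeta}. Suppose some algorithm $\mathcal A$ solves the creation-metaproblem for coset-generating polymorphisms in polynomial time. By definition, on input $\fB$ it outputs a coset-generating polymorphism of $\fB$ whenever one exists, and it may output anything otherwise. We may assume the output is a ternary operation table $m \colon B^3 \to B$; any output not of this shape is rejected immediately. Since $\mathcal A$ runs in polynomial time, the table has polynomial size, and in any case it has only $|B|^3$ entries.

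The decision procedure is: run $\mathcal A$ and then verify its output. By \cref{lem:heap}, $m$ is a coset-generating operation if and only if it satisfies the Maltsev identities \eqref{eq:malt} and the associativity identity \eqref{eq:heap}. This check amounts to evaluating at most $|B|^5$ instances of the identities, each by table lookups, which is polynomial in $|B|$. We also check that $m$ is a polymorphism of $\fB$. For each relation $R$ of arity $r$ and each triple of tuples in $R$, we apply $m$ componentwise and test membership of the result in $R$. This costs $O(|R|^3 \cdot r \cdot |R| \cdot r)$ per relation, which is polynomial in the representation size of $\fB$. We answer `Yes' exactly when all checks pass.

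For correctness: if $\fB$ is a coset structure, then $\mathcal A$ returns a coset-generating polymorphism, which passes every check. If $\fB$ is not a coset structure, then no output can pass all checks, since passing them would exhibit a heap polymorphism. Hence the coset-structure problem would be in \P, and by \cref{thm:groupCSPNPcomplete} this gives $\P = \NP$.

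There is no real obstacle here. The only care needed is in the verification step: it has to be polynomial in the input representation, and the polymorphism check must be done relation by relation rather than over $B^r$, which is what the count above achieves.
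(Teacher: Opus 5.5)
Your proposal is correct and matches the paper's proof: run the hypothetical creation algorithm, verify in polynomial time via \cref{lem:heap} (Maltsev identities plus associativity) and a relation-by-relation polymorphism check that the output is a coset-generating polymorphism, then invoke \cref{thm:groupCSPNPcomplete}. You simply spell out the verification costs that the paper leaves to \cref{obs:pcMetaToPMeta}.
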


    \begin{proof}
        A polynomial-time algorithm for the creation-metaproblem can be used to solve the metaproblem in \P by running the algorithm and then checking whether the output is indeed a coset-generating polymorphism. 
        The latter can be done in polynomial time using the identities from \cref{lem:heap} (see \cref{obs:pcMetaToPMeta}). The statement therefore follows from Theorem~\ref{thm:groupCSPNPcomplete}.
    \end{proof}

    \begin{remark} 
        By \cref{cor:groupCSPNPcomplete}, if there exists a uniform polynomial-time algorithm for coset CSPs, it cannot use the approach of first computing the coset-generating polymorphism and then apply the semiuniform algorithm from \cite{BulatovDalmau}.
    
        Note that there still might be a uniform $\P$-time algorithm for coset CSPs.   
        Indeed, \cite[Corollary 4.9]{MetaChenLarose} does not apply to this situation, because the property to be a coset structure cannot be defined by linear identities, as we have seen in Proposition~\ref{obs:notMaltsev}.
    \end{remark}

	Our proof of \cref{thm:groupCSPNPcomplete} is based on a reduction from the following graph problem, which has been shown to be \NP-complete in \cite[Theorem 4.9]{CowenGJ97} (where the problem is called $(2,1)$ graph coloring; see also \cite{LimaRSS18}).

	\begin{lemma}\label{lem:graphNPcomplete}
		The following problem is \NP-complete:
		\compproblem{A finite undirected simple graph $G = (V,E)$}{Can $E$ be written as the disjoint union of a matching and the edge set of a bipartite graph?}
	\end{lemma}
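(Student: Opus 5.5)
Membership in \NP is immediate: a certificate is the partition of $E$ into a matching $M$ and a set $E \setminus M$, together with a $2$-colouring of $(V, E\setminus M)$, and both can be checked in polynomial time. For hardness I would first reformulate the problem as a vertex-colouring problem. \emph{Claim:} $E$ is the disjoint union of a matching and the edge set of a bipartite graph if and only if there is a map $\chi\colon V\to\{0,1\}$ such that every vertex has at most one neighbour of its own colour (a $2$-colouring with defect $1$).
\begin{itemize}
\item If $E = M \cup E'$ with $(V,E')$ bipartite, take $\chi$ to be a proper $2$-colouring of $(V,E')$. Then all monochromatic edges lie in $M$, so they form a matching.
\item Conversely, given $\chi$, let $M$ be the set of monochromatic edges; this is a matching, and the remaining edges are properly $2$-coloured.
\end{itemize}

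I would then reduce from NAE-3SAT (\NP-complete) to this defective $2$-colouring problem, using gadgets built from $K_4$.
\begin{itemize}
\item \textbf{The $K_4$ gadget.} In any admissible colouring of $K_4$, the colour classes must have sizes $2+2$: a class of size $\geq 3$ would give a vertex with two same-coloured neighbours. Hence every vertex of a $K_4$ already has exactly one same-coloured neighbour inside it. Call a vertex \emph{saturated} if it lies in its own private $K_4$ attached to the rest of the graph only through that vertex.
\item \textbf{Inequality.} Any neighbour outside the $K_4$ of a saturated vertex must receive the opposite colour. So an edge between two saturated vertices is an inequality constraint. Moreover, an unsaturated vertex $x$ adjacent to a saturated $w$ is forced to have $\chi(x) \neq \chi(w)$.
\item \textbf{Variables.} For each variable, create a path of saturated vertices; consecutive vertices alternate in colour. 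This yields as many copies of the literal $v$ and of $\neg v$ as needed (even versus odd positions), with the colour read as the truth value.
\item \textbf{Clauses.} For a clause $\mathrm{NAE}(\ell_1,\ell_2,\ell_3)$, add a triangle on three fresh unsaturated vertices $x_1,x_2,x_3$. Join each $x_i$ to a fresh saturated copy of $\neg\ell_i$, which forces $\chi(x_i) = \chi(\ell_i)$.
\end{itemize}

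The triangle is where the NAE behaviour comes from. If $x_1,x_2,x_3$ are all the same colour, each of them has two same-coloured neighbours, which is forbidden. If exactly two of them agree, each $x_i$ has at most one same-coloured neighbour: its only other neighbour is the saturated copy, and that copy has the opposite colour. So the triangle enforces exactly the not-all-equal condition.

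For correctness I would verify both directions. A NAE-satisfying assignment extends to an admissible colouring: colour each $K_4$ by a $2+2$ split compatible with its attachment vertex, and check that no vertex gains a second same-coloured neighbour. Conversely, any admissible colouring restricted to one copy per variable gives a NAE-satisfying assignment.

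The main obstacle is the local bookkeeping: every attachment vertex, including path vertices with two path neighbours plus a clause neighbour, must have all its external neighbours of the opposite colour and never a second same-coloured one. This is exactly what saturation guarantees. To keep the check simple I would give each clause edge its own fresh copy on the variable path, so that every saturated vertex has bounded external degree. The construction is clearly polynomial; alternatively, one can simply cite \cite[Theorem 4.9]{CowenGJ97} for the defective-colouring hardness and use only the reformulation above.
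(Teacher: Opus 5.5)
Your proof is correct, but it uses a different gadget construction from the paper's. The paper reduces from \emph{positive} NAE-3SAT, first duplicating clauses so that every variable occurs in at least three clauses. It places a triangle on the three occurrence vertices $v_{i,1},v_{i,2},v_{i,3}$ of each clause. For each variable $X_k$ it adds an edge $\{x_{k,1},x_{k,2}\}$ whose two endpoints are both adjacent to every occurrence vertex of $X_k$. This edge lies in at least three triangles, and each of $x_{k,1},x_{k,2}$ can be covered by at most one matching edge, so $\{x_{k,1},x_{k,2}\}$ is forced into the matching. Consequently all occurrences of $X_k$ are linked by length-two paths containing no matching edge, and they end up on the same side.

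Your argument rests on the same underlying mechanism~-- a vertex whose single permitted same-coloured neighbour is already used up acts as a $\neq$-constraint towards all its other neighbours. You differ in two ways: you make the defective $2$-colouring reformulation explicit, and you realise the ``used up'' vertices with pendant $K_4$'s instead of the three-common-neighbours trick.

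Your version is more modular. The saturated vertex is a reusable inequality gadget, no clause duplication is needed, and alternating paths handle negated literals, so you can start from ordinary NAE-3SAT. The paper's version is more economical: it needs only two extra vertices per variable, no auxiliary cliques, and no negation gadget. The price is that it relies on the monotone variant of NAE-3SAT and on the occurrence-count assumption.
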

    In other words, the question in the computational problem is whether one can remove a matching so that the remaining graph becomes bipartite.
    For convenience of the reader, we give a proof in the appendix.

    To reduce the computational problem from Lemma~\ref{lem:graphNPcomplete} to the question whether a given structure has a coset-generating polymorphism, we need some basic notions from group theory.
    
    \subparagraph*{Group basics.}
    For $n \in {\mathbb N} \setminus \{0\}$, we write $C_n$ for the cyclic group of order $n$. If $G$ and $H$ are groups, then we write $G \times H$ for the direct product of $G$ and $H$.
    Given groups $G,H$ and a homomorphism $\phi$ from $H$ to the automorphism group $\Aut(G)$ of $G$ (or equivalently, an action of $H$ on $G$ via automorphisms) the semi-direct product $G \rtimes_\phi H$ is the group with elements $\{ (g,h)\mid g \in G, h\in H\}$ and multiplication defined by
    \[(g_1, h_1) \cdot (g_2, h_2) = (g_1 \,\phi(h_1)(g_2),\, h_1 h_2).\]
    In this case, $G$ (identified with $G \times \{1\}$) is a normal subgroup of $G \rtimes_\phi H$. 
    Then the map $\pi$ that sends each element $f$ of $G \rtimes_\phi H$ to the coset $f G $ of $G$ is a group homomorphism $G \rtimes_\phi H \to (G \rtimes_\phi H) / G \cong H$. The map $\iota \colon h \mapsto (1,h)$ is an embedding of $H$ into $G \rtimes_\phi H$ such that the composition $H \overset{\iota}{\to} G\rtimes_\phi H \overset{\pi}{\to} H$ is the identity (note that $\iota$ is not unique with this property). 
    For simplicity, we simply write $G\rtimes H$ instead of $G\rtimes_\phi H$ and specify $\phi$ only verbally.
    For instance, if $G = C_p$ and $H = C_{2k}$, then \emph{$G \rtimes_\phi H$ with the action by inversion} means that 
    $\phi$ is the homomorphism from $C_{2k}$ to $\Aut(C_p)$ which sends the generator $a$ of $C_{2k}$ to the map $x \mapsto x^{-1} \in \Aut(C_p)$.
    
    Another example that is relevant later is $G = C_p$ for $p \equiv 1 \mod 4$ and again $H = C_4$, where \emph{$G \rtimes_\phi H$ with a faithful action} means that $\phi$ is an injective homomorphism from $C_4$ to $\Aut(C_p)$. This means that $\phi$ maps the generator $a$ of $C_4$ to the map $x \mapsto x^{k} \in \Aut(C_p)$ for $k = \pm (p-1)/4$. Note that the two different choices of $k$ give isomorphic groups.

	In order to prove our main theorem, we focus on dihedral groups. The dihedral group $D_{2n}$ of order $2n$ for $n \geq 1$, is given by the following group presentation:
	\[D_{2n} = \Gen{d,s}{d^{n} = s^2 = 1, ds = sd^{-1}}.\]
	It is well-known that $D_{2n}$ is isomorphic to the symmetry group of a regular $n$-gon and, thus, can be understood geometrically (see, e.g.,~\cite[Theorem I.6.13]{Hungerford}). 
	Each element of $D_{2n}$ has a unique normal form $s^{k}d^{\ell}$ with $k \in \{0,1\}$ and $\ell \in \{0, \dots, n-1\}$.
    If $k=1$, then $s^{k}d^{\ell}$ corresponds to a reflection, otherwise to a rotation.
    From this normal form, we can see that $D_{2n}$ is isomorphic to the semidirect product $C_n \rtimes C_2$ with the action by inversion.

	\begin{lemma}\label{lem:cosetsDfourp}
		Let $m \in {\mathbb N}$. For the dihedral group $D_{4m} = \Gen{d,s}{d^{2m} = s^2 = 1, ds = sd^{-1}}$ the cosets of subgroups of order two are precisely as follows: 
		\[\set{ \{d^k,sd^\ell\} }{k,\ell \in \{0,\dots, 2m-1\}} \cup \set{\{a,ad^m\}\vphantom{d^\ell}}{a \in D_{4m}}.\]
	\end{lemma}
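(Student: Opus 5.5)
A coset of a subgroup of order two in a group $G$ is a set $aH$ where $H=\{1,h\}$ and $h$ is an involution. Such a coset is $\{a, ah\}$. So I will first determine all elements of order two in $D_{4m}$, and then list all sets $\{a, ah\}$ with $a \in D_{4m}$ and $h$ an involution.

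The involutions are found from the normal form $s^k d^\ell$. A rotation $d^\ell$ has order two if and only if $2\ell \equiv 0 \pmod{2m}$ and $\ell \not\equiv 0$. This leaves only $d^m$, which is where we use that the rotation subgroup has even order $2m$. Every reflection $sd^\ell$ satisfies
\[
(sd^\ell)^2 = s d^\ell s d^\ell = s s d^{-\ell} d^\ell = 1,
\]
using $d s = s d^{-1}$, i.e.\ $d^\ell s = s d^{-\ell}$. Hence every reflection is an involution. So the subgroups of order two are $\{1,d^m\}$ and $\{1,sd^\ell\}$ for $\ell \in \{0,\dots,2m-1\}$.

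Next I compute the cosets. For $H=\{1,d^m\}$, the cosets are $\{a, ad^m\}$ for $a \in D_{4m}$. Since $d^m$ is central, it does not matter whether we take left or right cosets, though left cosets suffice here as in \cref{lem:folklore}. These are exactly the second family in the statement.

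For $H=\{1,sd^\ell\}$ there are two cases.
\begin{itemize}
\item If $a=d^k$, the coset is $\{d^k, d^k s d^\ell\} = \{d^k, s d^{\ell-k}\}$.
\item If $a = sd^j$, the coset is $\{sd^j, sd^j s d^\ell\} = \{sd^j, d^{\ell-j}\}$.
\end{itemize}
In both cases the coset is one rotation together with one reflection. Conversely, any pair $\{d^k, sd^{\ell'}\}$ arises: take $a=d^k$ and $H=\{1, s d^{\ell'+k}\}$. Exponents are read modulo $2m$ throughout.

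Combining the two computations, the set of all cosets of order-two subgroups is exactly the union displayed in the lemma. The only delicate point is the relation $d^\ell s = s d^{-\ell}$, which follows by induction from $ds = sd^{-1}$; the rest is routine bookkeeping.
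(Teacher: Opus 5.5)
Your proof is correct and follows the same route as the paper. The paper's proof only notes that the order-two subgroups are those generated by the reflections $sd^k$ and by the rotation $d^m$. You have written out the involution check and the coset computations that this remark leaves implicit.
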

	
	\begin{proof}
		The lemma follows from the fact that the subgroups of $D_{4m}$ of order two are the subgroups generated by $sd^k$ for $k \in \{0, \dots, 2m-1\}$ (reflections at different axes) and the subgroup generated by $d^m$ (rotation by 180 degrees).
	\end{proof}
	
	In particular, the graph $\Gamma = (V,E)$ defined by $V := D_{4m}$ and 
	\[E := \big\{ \{u,v\} \mid \{u,v\} \text{ is a coset of a subgroup of } D_{4m} \big\}\] can be written as the union of the complete bipartite graph $K_{2m,2m}$ and a perfect matching on both sides. 

    \begin{remark}\label{rem:odd_dihedral_graph}
    For an odd integer $r$, the corresponding graph for the dihedral group of order $2r$ is the complete bipartite graph $K_{r,r}$ (as the dihedral group then does not contain a rotation by 180 degrees).
    \end{remark} 
	
	\begin{proof}[Proof of \cref{thm:groupCSPNPcomplete}]
		We reduce the problem from \cref{lem:graphNPcomplete} to our problem.
		
		Let $\Gamma = (V,E)$ be a given graph with $n = \abs{V}$.
		We may assume without loss of generality that $\Gamma$ has a vertex of degree at least four (otherwise, we can add such a vertex to $\Gamma$).
		First, compute a prime $p \geq 5$ with $n \leq 2p$ (this is clearly possible by checking whether any of the integers between $n/2$ and $n$ is prime and by the Bertrand--Chebyshev theorem the existence of such a prime is guaranteed).
		
		The domain of the structure $\fB$ will be $B := \{1, \dots, 4p\}$; we consider $V$ as a subset of $B$. For the signature $\tau$ of $\fB$, it will be convenient to use the finite set $E$ as the index set for the  entries of the tuple $\tau$, rather than $\{1,\dots,|E|\}$; define 
		$\tau \in \N^{E}$ as $\tau := (1, \dots, 1)$. 
		For $\{u,v\} \in E$, we have $R^{\fB}_{\{u,v\}} := \{u,v\}$ (thus, in particular, $\fB$ has only unary relations, each of which contains exactly two elements; moreover, the elements of $B \setminus V$ are not contained in any relation).
		Clearly, given $\Gamma$, the structure $\fB$ can be computed in logarithmic space.
		We claim that $\fB$ has a coset-generating polymorphism if and only if the edges of $\Gamma$ can be partitioned into a bipartite graph and a matching.

		First, assume that $\fB$ has a coset-generating polymorphism.
		Then, by \cref{lem:folklore}, its relations are cosets of subgroups of some group $G$ of order $4p$.
        It is well known\footnote{See e.g.\ the groupprops wiki \url{https://groupprops.subwiki.org/wiki/Classification_of_groups_of_order_four_times_a_prime_congruent_to_1_modulo_4} (accessed on April 27, 2026)} 
         that, since $p \geq 5$ is prime, there are up to isomorphism, at most five groups of order $4p$, namely 
		\begin{enumerate}
			\item $C_2 \times C_2 \times C_p$, 
			\item $C_4 \times C_p \cong C_{4p}$, 
			\item $C_2 \times D_{2p} \cong D_{4p}$,  
			\item $C_p \rtimes C_4$ with the action by inversion (dicyclic group), and 
			\item possibly $C_p \rtimes C_4$ with a faithful action (if $p\equiv 1 \mod 4$).
		\end{enumerate} 
        Clearly, each of these groups has order $4p$; the fact that 
        every group of order $4p$ is isomorphic to one of the groups in this list can be shown by elementary group theory (essentially the Sylow theorems; see, e.g.~\cite[Chapter 1, Section 6]{Lang}). 
        In particular, one first shows that in every such group the (unique) $p$-Sylow subgroup is normal (see~\cite[Exercise 28]{Lang}).
               
		In the abelian cases (Item 1 and 2) and in the case 
		$G \cong C_p \rtimes C_4$ with the action by inversion (Item 4), 
		there are at most three subgroups of order two; therefore, each element can be contained in at most three relations.
		Hence, these cases are ruled out, because we required that $\Gamma$ has a vertex of degree four.
		
		In the case $C_p \rtimes C_4$ with the faithful action (Item 5), the subgroups of order two are all contained in the unique subgroup isomorphic to $D_{2p}$ (this is because under the projection $C_p \rtimes C_4 \to C_4$ an element of order two must map to the element of order two in $C_4$; thus, all order-two elements are contained in the unique subgroup of $C_p \rtimes C_4$ that is isomorphic to $C_p\rtimes C_2 $).
        Therefore, according to \cref{rem:odd_dihedral_graph}, $\Gamma$ would have to be a subgraph of the disjoint union of two copies of $K_{p,p}$ (one for each coset of $D_{2p}$)~-- in particular, $\Gamma$ is bipartite.
			
		Therefore, the only possible group is $D_{4p}$.
		In this case, by \cref{lem:cosetsDfourp}, we know how the cosets of subgroups of order two look like. 
		In particular, $E$ can be written as the union of a bipartite graph and a matching.

		Conversely,  assume that $\Gamma$ can be written as the union of a bipartite graph and a matching.
		Write $V = C \cup D$ where $C$ and $D$ are the two sides of the bipartite graph~-- meaning that the edges within $C$ as well as the edges within $D$ form a matching. 
		Then we can choose an injection $f \colon V \to D_{4p}$ such that 
		\begin{itemize}
			\item $f(C) \sse \gen{d}$, 
			\item $f(D)\sse s\gen{d}$, and 
			\item for every edge $\{u,v\}$ with both $u,v \in C$ or both $u,v \in D$ we have $f(u) = f(v) d^p$.
		\end{itemize}
		This is possible because $\abs{C},\abs{D} \leq \abs{\gen{d}}$ and the edges within $C$ and $D$ form a matching.    
		Hence, by \cref{lem:cosetsDfourp}, all relations of $\fB$ are cosets of subgroups of $D_{4p}$; thus, $\fB$ has a coset-generating polymorphism.
	\end{proof}
    
    Observe that the structure $\fB$ in this proof \emph{always} has a Maltsev polymorphism: indeed, any map $m \colon B^3 \to B$ satisfying the Maltsev identities leaves the defined relations invariant as they are all unary and contain exactly two elements.
	
	\section{Abelian Heap Polymorphisms}

	In the context of the metaproblem for coset-generating  polymorphisms, Chen and Larose~\cite{MetaChenLarose} also state ``One can also ask these questions for restricted classes of groups.''
	In particular, the complexity of the following computational problem remains open. 
	\compproblem{A structure $\fB$}{Is $\fB$ a coset structure for some abelian group?}   
    
	It is easy to see that a structure $\fB$ is a coset structure for some abelian group if and only if it is preserved by a heap operation $m$ which is \emph{abelian}\footnote{It is well known and easy to show that abelian heap operations are abelian in the sense of universal algebra (we refrain from giving the definition, since we only need it in this footnote), and every heap operation which is abelian in the sense of universal algebra is an abelian heap operation; this justifies the terminology.}, i.e., satisfies 
	\[m(x,y,z) \approx m(z,y,x).\]
    We write $\SigAb$ for the identities that state the existence of an abelian heap operation.

	\begin{corollary}\label{cor:abheap}
        $\PMeta(\SigAb, \SigMaltsev)$ is in \P. 
	\end{corollary}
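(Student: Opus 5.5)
The plan is to obtain the statement as a direct instance of \cref{prop:inP}, with $\Sigma_1 := \SigAb$ and $\Sigma_2 := \SigMaltsev$. Three hypotheses need checking: $\SigAb$ is idempotent, $\SigAb \models \SigMaltsev$, and $\SigMaltsev$ is a finite set of linear identities.

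The last two are immediate. $\SigMaltsev$ consists of the two linear identities in~\eqref{eq:malt}. Every abelian heap satisfies the Maltsev identities, since $\SigAb$ contains them. For idempotence, note that $\SigAb$ formally consists of~\eqref{eq:malt}, \eqref{eq:heap} and $m(x,y,z)\approx m(z,y,x)$. Idempotence is not literally listed, but $m(x,x,x) \approx x$ is an instance of $m(x,x,y)\approx y$. So we may add $m(x,\dots,x)\approx x$ to $\SigAb$ without changing its models, and $\SigAb$ is then idempotent in the sense of \cref{sect:polymorphisms}.

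The substantive hypothesis is that the uniform CSP for $\SigAb$ is in \P, and I expect this to be the main step. The plan is to invoke the uniform algorithm based on affine integer programming (AIP)~\cite{BrakensiekG20}. Given $(\fA,\fB)$, one writes down the standard AIP relaxation: integer variables $\lambda_{a,b}$ for $a\in A$, $b \in B$, and $\lambda_{R,\mathbf a,\mathbf b}$ for each constraint tuple $\mathbf a\in R^{\fA}$ and each $\mathbf b \in R^{\fB}$. These satisfy the affine equations $\sum_b \lambda_{a,b}=1$ together with the marginal consistency conditions. The system has polynomial size in $(\fA,\fB)$, and its solvability over $\mathbb Z$ can be decided in polynomial time via Hermite/Smith normal form.

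Soundness of the relaxation (a homomorphism gives an integer solution) is trivial. For completeness I would rely on the known characterisation: AIP solves $\CSP(\fB)$ whenever $\Pol(\fB)$ contains alternating operations of all odd arities. These are operations $f(x_1,\dots,x_{2k+1})$ that are invariant under permutations of the odd-indexed and of the even-indexed arguments, and that satisfy the cancellation $f(x_1,\dots,x_{2k-1},y,y)=f(x_1,\dots,x_{2k-1})$. An abelian heap $m$, written as $x y^{-1} z$ over an abelian group, yields these directly via $f(x_1,\dots,x_{2k+1}) = x_1 - x_2 + x_3 - \dots + x_{2k+1}$, which is a composition of copies of $m$ and hence a polymorphism. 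This is the only point where abelianness is used, and since the algorithm never needs the operation itself, it is uniform.

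Putting this together, \cref{prop:inP} yields that $\PMeta(\SigAb,\SigMaltsev)$ (and even $\PCreaMeta(\SigAb,\SigMaltsev)$) is in \P.
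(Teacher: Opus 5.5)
Your proposal is correct and is essentially the paper's proof: apply \cref{prop:inP} with $\Sigma_1=\SigAb$ and $\Sigma_2=\SigMaltsev$, taking AIP as the uniform algorithm, which is justified because the abelian heap $x-y+z$ composes to the alternating polymorphisms $x_1-x_2+\cdots+x_{2k+1}$ of all odd arities (the paper cites Theorem~7.19 of~\cite{BartoBKO21} for this step). You also check explicitly that $\SigAb$ is idempotent, since $m(x,x,x)\approx x$ is the substitution instance $y:=x$ of $m(x,x,y)\approx y$; the paper leaves this detail implicit.
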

	\begin{proof}
		If $\fB$ has an abelian heap polymorphism, then by  Lemma~\ref{lem:heap} it is preserved by $(x,y,z) \mapsto x-y+z$ with respect to some abelian group $G$. Hence, for every $n \geq 2$ it is also preserved by $(x_1, y_1, x_2 ,y_2,\dots, x_n) \mapsto x_1-y_1+x_2-y_2+\cdots+x_n$, which is 
		an example of a so-called \emph{alternating polymorphism}.
		It is known that for such $\fB$, the AIP algorithm mentioned in the introduction is a uniform polynomial-time algorithm~\cite[Theorem 7.19]{BartoBKO21}.
        Hence, the statement follows from Proposition~\ref{prop:inP}. 
	\end{proof}
	
	Note that $\PMeta(\SigAb,\SigMaltsev)$ is in \P despite the fact that the polynomial-time tractability of $\Meta(\SigAb)$ and of $\Meta(\SigMaltsev)$ is open. 

\section{Conclusion and Open Problems}\label{sect:open}
    
    In this work we have introduced the promise version of the metaproblem in constraint satisfaction, and showed that several known results about the complexity of the metaproblem and connection to the existence of uniform polynomial-time CSP algorithms extend naturally to this larger setting. 
    We identified a pair of 
    polymorphism conditions $(\Sigma_1,\Sigma_2)$ whose promise metaproblem is in \P, but 
    where the metaproblem for $\Sigma_1$ and the metaproblem for $\Sigma_2$ are not known to be in \P. 
    We then proved that the problem of deciding whether a given finite structure has a coset-generating (i.e., heap) polymorphism is \NP-complete, solving an open problem of Chen and Larose~\cite{MetaChenLarose}.

Before the present work, there were several plausible scenarios how the question for the uniform complexity for coset CSPs might be settled: 
\begin{enumerate}[(A)]
	\item there could be a polynomial-time algorithm that given a finite structure promised to have a coset-generating polymorphism  
    actually computes a corresponding group.
    If this is the case, then one obtains a uniform polynomial-time algorithm for such CSPs by using Bulatov and Dalmau's semiuniform algorithm for CSPs with Maltsev constraints~\cite{Maltsev};
	\item the group structure could be difficult to compute, yet one could compute in polynomial time a Maltsev polymorphism (and, again, the algorithm from \cite{Maltsev} can be used to decide the CSP); 
	\item there could be no polynomial-time algorithm to compute a Maltsev polymorphism, but still coset CSPs could be uniformly decidable in polynomial time by another algorithm;
	\item the uniform CSP for coset structures could be hard~-- here, besides being \NP-complete, there are other hardness results imaginable: for instance, it might be hard for graph isomorphism or hard for group isomorphism.
\end{enumerate}
In this work we ruled out possibility (A) by \cref{cor:groupCSPNPcomplete} (unless $\P = \NP$) and possibility (C) by \cref{cor:cosetgenuniformPmeta}.
In particular, if there is a uniform polynomial-time algorithm for coset CSPs, then a Maltsev polymorphism can be computed in \P, but no coset-generating polymorphism can be computed in general (unless $\P = \NP$).

    We ask the following questions that are related to this work. 
    \begin{itemize}
       \item Is there a uniform polynomial-time algorithm for coset CSPs? 
        \item What is the answer to this question restricted to domains of size $4p$, where $p$ is a prime? (This question is motivated 
        by our proof of \cref{thm:groupCSPNPcompleteIntro}.)
        \item What is the complexity of deciding whether a given structure has an abelian heap polymorphism?
        \item What is the complexity of 
        deciding whether a given structure 
        has an abelian Maltsev polymorphism?
    \end{itemize}
The complexity of the metaproblem and the uniform CSP for many natural finite sets of linear identities remains open. We wonder whether the class of (promise) metaproblems for fixed sets of identities is perhaps a candidate for a class that exhibits a complexity non-dichotomy.

\bibliography{global}

\appendix

\section{Proof of \cref{lem:graphNPcomplete}}

    For convenience of the reader, we restate and give a proof of \cref{lem:graphNPcomplete}. 
	\begin{lemma}[\cref{lem:graphNPcomplete} restated]\label{lem:appendix}
		The following problem is \NP-complete:
		\compproblem{A finite undirected simple graph $G = (V,E)$}{Can $E$ be written as the disjoint union of a matching and the edge set of a bipartite graph?}
	\end{lemma}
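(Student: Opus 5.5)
Membership in \NP is immediate: a certificate is a $2$-coloring $c\colon V\to\{0,1\}$, and we check in polynomial time that the monochromatic edges (those $\{u,v\}$ with $c(u)=c(v)$) form a matching, i.e., every vertex is incident to at most one monochromatic edge. Equivalently, the question asks whether $V$ can be partitioned into two sets each inducing a subgraph of maximum degree at most one. For hardness I would reduce from a known \NP-complete problem. The most convenient source is probably \textsc{Not-All-Equal 3-SAT}, or alternatively $3$-colorability of graphs of bounded degree; I would commit to NAE-3SAT, since its two-sided symmetry matches the two sides $C,D$ of the partition.

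The reduction builds gadgets. First, an \emph{equality/inequality gadget}: a small graph attached to two vertices $u,v$ that forces $c(u)\neq c(v)$ in every valid coloring while keeping both $u$ and $v$ free of monochromatic edges inside the gadget, so that they retain ``capacity'' for external use. A natural candidate comes from odd cycles: in a triangle $\{a,b,x\}$ exactly one edge must be monochromatic, which consumes the matching capacity of two of its vertices. Gluing triangles or $K_4$-like pieces (note that $K_4$ must have exactly two disjoint monochromatic edges, a perfect matching, so each vertex of a $K_4$ has its capacity fully used) lets us force structure. I would use the fact that every vertex of a $K_4$ is saturated to build rigid ``saturated'' vertices: any further edge from a saturated vertex must be bichromatic, which yields a strict inequality constraint $c(u)\ne c(w)$ for any new neighbor $w$. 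Chaining two such constraints gives equality, so we obtain variable vertices with many copies forced to equal or opposite colors, each copy having free capacity.

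For each variable $x$ we create a chain of copies of a variable vertex (equal colors propagated via saturated intermediaries), one copy per occurrence. For a clause $(\ell_1,\ell_2,\ell_3)$ we add a triangle on the corresponding literal copies (using negated copies for negative literals). A triangle with all three vertices the same color has three monochromatic edges meeting at a vertex, which is not a matching, whereas an NAE assignment makes exactly one edge of the triangle monochromatic; since each literal copy is used in only one clause and has no other monochromatic edges, this single monochromatic edge is fine. Conversely, a valid partition gives an NAE assignment by reading off the colors.

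The main obstacle will be verifying that the gadgets really force the intended equalities and inequalities and introduce no unintended freedom, in particular that the saturated $K_4$ attachments do not themselves leave alternative colorings that free up capacity. I would check this by a small case analysis on the gadget (each $K_4$ has exactly three valid $2$-colorings up to swapping, all $2{+}2$ splits), and, if that proves problematic, fall back on citing \cite[Theorem 4.9]{CowenGJ97} directly.
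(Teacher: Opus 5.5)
Your proposal is correct and is essentially the paper's argument: a reduction from NAE-3SAT where each clause becomes a triangle on its occurrence vertices (forcing a $2$--$1$ split with exactly one monochromatic edge), and all occurrences of a variable are tied to the same colour through a common saturated neighbour. The paper uses a lighter saturation gadget and \emph{positive} NAE-3SAT, so no negation copies are needed. After duplicating clauses so that each $X_k$ occurs at least three times, it joins two hubs $x_{k,1},x_{k,2}$ by an edge and connects both to all occurrence vertices of $X_k$; the at least three triangles through $\{x_{k,1},x_{k,2}\}$ then force that edge into the matching. In your $K_4$ version, for the completeness direction, give each $K_4$ at most two attachment vertices, so that a $2{+}2$ split matching the prescribed colours always exists.
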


    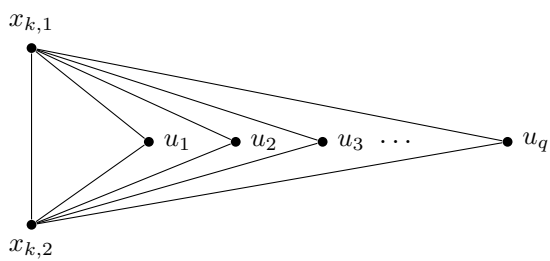
\begin{figure}
        \centering
        \begin{tikzpicture}[
            node distance=1.4cm,
            circ/.style={circle,fill,inner sep=1.3pt}
            ]
            
            % Base nodes
            \node[circ,label=below:$x_{k,2}$] (y) {};
            \node[circ,above=2.2cm of y,label=above:$x_{k,1}$] (x) {};
            
            % Middle row (between x and y)
            \node[circ,right=1.4cm of y, yshift=1.1cm, label=right:$u_1$] (m1) {};
            \node[circ,right=1.0cm of m1,              label=right:$u_2$] (m2) {};
            \node[circ,right=1.0cm of m2,              label=right:$u_3$] (m3) {};
            
            % Three dots
            \node[right=0.55cm of m3] (dots) {\large $\dots$};
            
            % u_q, moved further to the right
            \node[circ,right=1.0cm of dots, label=right:$u_q$] (m4) {};
            
            % Connections
            \draw (y) -- (m1) -- (x);
            \draw (y) -- (m2) -- (x);
            \draw (y) -- (m3) -- (x);
            \draw (y) -- (m4) -- (x);
            
            % Direct edge
            \draw (x) -- (y);
            
        \end{tikzpicture}
        \caption{For each variable $X_k$ ($k \in \{1, \dots, n\}$) we have one of these subgraphs in the proof of Lemma~\ref{lem:graphNPcomplete}.
        Here the vertices $u_1, \dots, u_q$ correspond to those vertices $v_{i,j}$ where the literal $L_{i,j}$ is the variable $X_k$. Note that $q \geq 3$.
        }
        \label{fig:gadgetnew}
    \end{figure}

    \begin{proof}
		Let us show a reduction from positive not-all-equal 3SAT~\cite{GareyJohnson,Schaefer}. 
		Consider an instance \[\Phi = \big \{\vphantom{d^\ell}\{L_{i,1},L_{i,2},L_{i,3}\} \mid i \in \{1, \dots, m\}\big \}\] of this problem with $m$ clauses and $n$ variables $X_1, \dots, X_n$ (\ie $L_{i,j} \in \{X_1, \dots, X_n\}$ for all $i \in \{1,\dots,m\}$ and $j \in \{1,2,3\}$). One may assume without loss of generality that each variable appears in at least three clauses (by simply duplicating clauses). We construct a graph $(V,E)$ with vertex set 
        \[V = \big \{v_{i,j} \mid i \in \{1,\dots, m\}, j \in \{1,2,3\}\big \} \cup \big\{x_{k,\ell} \mid k\in \{1, \dots, n\}, \ell \in \{1,2\} \big\}\] 
        (i.e.\ we have one vertex for each literal and two additional vertices for every variable) and edge set $E$ such that 
        \begin{itemize}
            \item $\{v_{i,1},v_{i,2},v_{i,3}\}$ becomes a clique for every $i \in \{1,\dots,m\}$,
            \item there is an edge $\{x_{k,1}, x_{k,2}\}$ for every $k \in \{1,\dots,n\}$,
            \item whenever $L_{i,j} = X_k$, there are the edges $\{v_{i,j},x_{k,1}\}$ and $\{v_{i,j},x_{k,2}\}$.
        \end{itemize}
        Thus, \begin{align*}
            E &= \big \{\{v_{i,j},v_{i,k}\} \mid i \in \{1,\dots, m\}, j\neq k \in \{1,2,3\} \big \}\\
            &\cup \big\{\{x_{k,1}, x_{k,2}\} \mid k \in\{1, \dots, n\} \big\} \\
            &\cup \big\{\{v_{i,j},x_{k,\ell}\} \mid i \in \{1,\dots, m\}, j \in \{1,2,3\},  k \in\{1, \dots, n\} ,\ell \in \{1,2\} \text{ with } L_{i,j} = X_k \big\}.
        \end{align*}
        The edges of the second and third type associated to a single variable $X_k$ are illustrated in Figure~\ref{fig:gadgetnew}~-- the edges of the first type connect these respective gadgets associated to different variables.
        Clearly, this graph can be computed in polynomial time from the given instance $\Phi$. 

        We claim that $\Phi$ has a solution if and only if the given graph can be written as a disjoint union of a matching and the edge set of a bipartite graph.
		In order to see this, first note that each of the cliques $\{v_{i,1},v_{i,2},v_{i,3}\}$ can be written as a union of a bipartite graph and a matching (e.g., the edges $\{v_{i,j},v_{i,2}\}$
		for $j \in \{1,3\}$ form a bipartite graph, and the edges $\{v_{i,1},v_{i,3}\}$ form a matching).
		Moreover, any such edge decomposition must have one vertex of each triangle on one side of the bipartite graph and the other two vertices on the other side (and there is precisely one matching edge).

        Now assume that $\Phi$ has a satisfying assignment $\sigma \colon \{X_1, \dots, X_n\} \to \{0,1\}$ (\ie for every $i \in \{1,\dots,m\}$ we have $1\leq \sigma(L_{i,1}) + \sigma(L_{i,2}) + \sigma(L_{i,3}) \leq 2$).
        Setting
        \[A = \{v_{i,j}\mid \sigma(L_{i,j})=1\} \cup \{x_{k,\ell} \mid \sigma(X_k) = 0\} \]
        and $B = V\setminus A$ gives us a bipartite decomposition of the graph after removing the matching consisting of all the edges $\{x_{k,1}, x_{k,2}\}$ for $k \in \{1, \dots, n\}$ and $\{v_{i,j} v_{i,k}\}$ whenever $ \sigma(L_{i,j}) = \sigma(L_{i,k}) $ with $i \in \{1, \dots, m\}$ and $j\neq k \in \{1,2,3\}$.

        To see the other direction of the correctness proof, observe that in any decomposition of $E$ into a matching and bipartite graph, all the edges $\{x_{k,1}, x_{k,2}\}$ must be part of the matching. 
        This is because by our assumption $x_{k,1}$ and $x_{k,2}$ have at least three common neighbours, see \cref{fig:gadgetnew}.
        (Each of the respective triangles containing the edge $\{x_{k,1}, x_{k,2}\}$ must contain one matching edge.
        Since there are at least three of these triangles, the only possibility is the edge $\{x_{k,1}, x_{k,2}\}$.)

        Now, let $V = A \cup B$ be a bipartite decomposition of $V$ after removing a suitable matching.
        From the previous observation, we conclude that if $L_{i,j} = L_{\mu, \nu}$, then $v_{i,j}$ and $v_{\mu, \nu}$ are either both in $A$ or both in $B$, because they are connected by a path of length two that cannot contain a matching edge.
        Therefore, the assignment $\sigma \colon \{X_1, \dots, X_n\} \to \{0,1\}$ with $\sigma(X_k) = 1$ if and only if $x_{k,1} \in A$ is a satisfying assignment.
	\end{proof}

\end{document}